\begin{document}

\global\long\def\intr{\int_{R}}
 \global\long\def\sbr#1{\left[ #1\right] }
 \global\long\def\cbr#1{\left\{  #1\right\}  }
 \global\long\def\rbr#1{\left(#1\right)}
 \global\long\def\ev#1{\mathbb{E}{#1}}
 \global\long\def\E{\mathbb{E}}
 \global\long\def\P{\mathbb{P}}
 \global\long\def\R{\mathbb{R}}
 \global\long\def\N{\mathbb{N}}
  \global\long\def\Z{\mathbb{Z}}
 \global\long\def\norm#1#2#3{\Vert#1\Vert_{#2}^{#3}}
 \global\long\def\pr#1{\mathbb{P}\rbr{#1}}
 \global\long\def\cleq{\lesssim}
 \global\long\def\ceq{\eqsim}
 \global\long\def\conv{\rightarrow}
 \global\long\def\Var#1{\text{Var}(#1)}
 \global\long\def\TDD{}
 \global\long\def\dd#1{\textnormal{d}#1}
 \global\long\def\inti{\int_{0}^{\infty}}
 \global\long\def\crr{\mathcal{C}([0;\infty),\R)}
 \global\long\def\sb#1{\langle#1\rangle}
 \global\long\def\pm#1{d_{P}\rbr{#1}}
 \global\long\def\crt{\mathcal{C}([0;T],\R)}
 \global\long\def\nuu{\nu_{n;\lambda}}
 \global\long\def\ZZ{Z_{\Lambda_{n}}}
 \global\long\def\PP{\mathbb{P}_{\Lambda_{n}}}
 \global\long\def\EE{\mathbb{E}_{\Lambda_{n}}}
 \global\long\def\LL{\Lambda_{n}}
 \global\long\def\AA{\mathcal{A}}
 \global\long\def\evx{\mathbb{E}_{x}}
 \global\long\def\pin#1{1_{\cbr{#1\in\mathcal{A}}}}
 \global\long\def\Zd{\mathbb{Z}^{d}}
 \global\long\def\ddp#1#2{\langle#1,#2\rangle}
 \global\long\def\intc#1{\int_{0}^{#1}}
 \global\long\def\T#1{\mathcal{P}_{#1}}
 \global\long\def\ii{\mathbf{i}}
 \global\long\def\star#1{\left.#1^{*}\right.}
 \global\long\def\pspace{\mathcal{C}}
 \global\long\def\eq{\varphi}
 \global\long\def\grad{\text{grad}}
 \global\long\def\var{\text{var}}
 \global\long\def\ab{[a,b]}
 \global\long\def\ra{\rightarrow}
 \global\long\def\TTV#1#2#3{\text{TV}^{#3}\!\rbr{#1,#2}}
 \global\long\def\TTVemph#1#2#3{\emph{TV}^{#3}\!\rbr{#1,#2}}
 \global\long\def\V#1#2#3{\text{V}^{#3}\!\rbr{#1,#2}}
 \global\long\def\vi#1#2{\text{vi}\rbr{#1,#2}}
 \global\long\def\eqdef{:=}
 \global\long\def\UTV#1#2#3{\text{UTV}^{#3}\!\rbr{#1,#2}}
 \global\long\def\DTV#1#2#3{\text{DTV}^{#3}\!\rbr{#1,#2}}
 \global\long\def\ns{\infty}
 \global\long\def\f{:\left[a,b\right]\ra\R}
 \global\long\def\TV{\text{TV}}
 \global\long\def\osc{\text{osc}}
 \global\long\def\mathcalu{{\mathcal U}^{p}\left[a,b\right]}
 \global\long\def\cont{\text{cont}}
 \global\long\def\oP{\mathbb{\bar{P}}}
 \global\long\def\oP{\overline{\mathbb{P}}}

\newtheorem{theorem}{Theorem} \newtheorem{tw}[theorem]{Theorem}
\newtheorem{stw}[theorem]{Proposition} \newtheorem{lem}[theorem]{Lemma}
\newtheorem{rem}[theorem]{Remark} \newtheorem{defi}[theorem]{Definition}
\newtheorem{definition}[theorem]{Definition} \newtheorem{corollary}[theorem]{Corollary}

\title{On the quadratic variation of the model-free price paths with jumps}

\author[Galane]{Lesiba Ch. Galane}
\address{Lesiba Ch. Galane, Department of Mathematics and Applied Mathematics, University of Limpopo, South Africa}
\email{Lesiba.Galane@ul.ac.za}

\author[\L ochowski]{Rafa\l{} M. {}\L ochowski}
\address{Rafa\l{} M. \L ochowski, Department of Mathematics and Mathematical Economics, Warsaw School of Economics, Poland, and African Institute for Mathematical Sciences, South Africa}
\email{rlocho314@gmail.com}

\author[Mhlanga]{Farai J. Mhlanga}
\address{Farai J. Mhlanga, Department of Mathematics and Applied Mathematics, University of Limpopo, South Africa}
\email{Farai.Mhlanga@ul.ac.za}

\maketitle

\begin{abstract}
We prove that the model-free typical (in the sense of Vovk) c\`adl\`ag price paths with mildly restricted downward jumps possess quadratic variation which does not depend on the specific sequence of partitions as long as these partitions are obtained from stopping times such that the oscillations of a path on the consecutive (half-open on the right)
intervals of these partitions tend (in a specified sense) to 0. Finally, we also define quasi-explicit, partition independent quantities which tend to this quadratic variation.
\end{abstract}

\section{Introduction}

In this paper we deal with several properties of the quadratic variation
of model-free, c\`{a}dl\`{a}g price paths and integrals driven by such paths.
In \cite{Vovk_cadlag:2015} Vovk proved the existence of quadratic
variation of c\`{a}dl\`{a}g price paths with mildly restricted jumps (along
the sequence of so called Lebesgue partitions). In \cite{LochPerkPro:2016}
this result was generalised and the existence of quadratic variation along the same sequence of Lebesgue partitions of
c\`{a}dl\`{a}g price paths with mildly restricted downward jumps was proven (this in particular proves the existence of quadratic variation of non-negative
c\`{a}dl\`{a}g price paths).
Having these results in hand and a pathwise version of the BDG inequality
proven in \cite{Beiglboeck:2015}, it became possible to define a
stochastic integral along c\`{a}dl\`{a}g price paths with mildly restricted downward jumps for broad class of integrands.
Moreover, there were also proven some continuity results for such
integrals. Other approach for integrators with jumps whose absolute
or relative size is bounded by some constant, was presented in \cite{Vovk_integral:2015}.
In \cite{Vovk_integral:2015} (see also \cite{Vovk_Schafer_semimart:2016}) there was also introduced a very interesting notion of \emph{uniform on compacts quasi-always (ucqa)} convergence, which roughly means that the trader becomes infinitely rich immediately after the convergence ceases to hold.
Unfortunately we were able only to prove rather weak modes of convergence, for which continuity results (like Theorem \ref{thm:integral} and Corollary \ref{cor:c\`{a}dl\`{a}g concentration estimate}) were available.

Since the existence of quadratic variation of price paths in model-free
finance is of utmost importance (also in practical sense, as it
corresponds to the well known notion of realized volatility), it is
problematic that this fundamental object \emph{a priori} depends on
the choice of partitions. Fortunately, using the mentioned continuity
results for model-free integrals, in this note we prove the independence
of the quadratic variation of model-free price paths as long as the
partitions are obtained from \emph{stopping times}. This result is
analogous to the results obtained for continuous (model-free) price
paths by Vovk and Shafer in \cite{Vovk_Schafer_semimart:2016}, for
continuous semimartingales by Davis, Ob\l\'{o}j and Siorpaes in \cite{Obloj_local:2015}
and for c\`{a}dl\`{a}g semimartingales in Cont and Fourni$\acute{\text{e}}$ \cite{ContFournie:2010}.
Further, we prove another, partition-independent formula for the quadratic
variation of model-free price paths with jumps in terms of the \emph{truncated
variation}. Finally, we deal with possibility of the extension of
this formula for deterministic c\`{a}dl\`{a}g functions possessing quadratic
variation along some sequence of (refining) partitions in the sense
of F$\ddot{\text{o}}$llmer. We start with a technical result, which will be used in the sequel --- integration by parts formula for model-free price paths perturbed by an adapted, finite
variation path.

\section{Notation, definitions and continuity of model-free, pathwise integrals}

Let $d\in\left\{ 1,2,\ldots\right\} ,$ $T\in\left(0,+\ns\right),$
$\N=\left\{ 1,2,\ldots\right\}$, $\N_0 = \N \cup \cbr{0}$, $\Z$ be the set of all integers and $\R_{+}=\left[0,+\ns\right).$
For a topological space ${\mathcal E}$ we will denote by $D([0,T],{\mathcal E})$ the space of all c\`{a}dl\`{a}g functions $\omega:\left[0,T\right]\ra{\mathcal E}.$
Now let $\psi:\R_{+}\ra\R_{+}$ be some
non-decreasing function.
We equip $\R^d$ with ${l}^2$ norm $|\cdot|$ and consider the underlying space $\Omega$ which is a subset of the set $\Omega_{\psi}\subset D\rbr{[0, T], \R^d}$ of c\`adl\`ag functions with mildly restricted jumps directed downwards, that is $\omega=(\omega^1,\dots,\omega^d) \in \Omega_\psi$ if for $i=1,\ldots , d$  it satisfies
\begin{equation*}
  \Delta\omega^i (t) := \omega^i (t) -\omega^i (t-)\geq -\psi \bigg(\sup_{s\in[0,t)}|\omega(s)|\bigg),\quad t\in (0,T],
\end{equation*}
where $\omega^i (t-):= \lim_{s\to t, \, s<t} \omega (s)$.

The following sample spaces are examples of~$\Omega$:
  \begin{enumerate}
    \item $\Omega_c:=C([0,T],\R^d)$, the space of all continuous functions $\omega\colon [0,T]\to\R^d$,
    \item $\Omega_+ := D([0,T], \R^d_+)$, the space of all non-negative c\`adl\`ag functions $\omega\colon [0,T]\to\R^d_+$ (here $\psi(x)=x$),
    \item $\tilde \Omega_{\psi}$ which is defined as the subset of all c\`adl\`ag functions $\omega\colon [0,T]\to\R^d$ such that
          \begin{equation*}
            |\omega (t)-\omega(t-)|\leq \psi\bigg(\sup_{s\in[0,t)}|\omega(s)|\bigg),\quad t\in (0,T],
          \end{equation*}
          and $\psi\colon\R_+\to (0,\infty)$ is a fixed non-decreasing function.
   \end{enumerate}
   A detailed financial interpretation of the last space can be found in~\cite{Vovk_cadlag:2015} and a generalization of this space allowing for different bounds for jumps directed upwards resp. downwards was recently introduced in~\cite{Vovk_integral:2015}.

$\Omega$ will be our sample space and for each
$t\in\left[0,T\right],$ ${\mathcal F}_{t}^{\circ}$ is defined to be
the smallest $\sigma$-algebra on $\Omega$ that makes all functions
$\omega\mapsto\omega\left(s\right),$ $s\in\left[0,t\right],$ measurable
and ${\mathcal F}_{t}$ is defined to be the universal completion of ${\mathcal F}_{t}^{\circ}.$
An event is an element of the $\sigma$-algebra ${\mathcal F}_{T}.$ Stopping
times $\tau:\Omega\ra\left[0,T\right]\cup\left\{ \ns\right\} $ with
respect to the filtration $\left({\mathcal F}_{t}\right)_{t\in\left[0,T\right]}$
and the corresponding $\sigma$-algebras ${\mathcal F}_{\tau}$ are defined
as usual. $S$ is the coordinate process, i.e. $S_{t}\left(\omega\right):=\omega\left(t\right).$

Now a process $H:\Omega\times\left[0,T\right]\ra\R^{d}$ is called
a \emph{simple strategy} if there exist stopping times $0=\tau_{0}\le\tau_{1}\le\ldots$
and ${\mathcal F}_{\tau_{k}}$-measurable bounded functions $h_{k}:\Omega\ra\R^{d},$ $k\in \N_0$,
such that for each $\omega\in\Omega,$ $\tau_{K}\left(\omega\right)=\tau_{K+1}\left(\omega\right)=\ldots\in \left[0,T\right]\cup\left\{ \ns\right\}$
from some $K\in \N_0$ on, and
\[
H_{t}\left(\omega\right)=h_{0}\left(\omega\right){\bf 1}_{\{0\}}\left(t\right)+\sum_{k=0}^{+\ns}h_{k}\left(\omega\right){\bf 1}_{\left(\tau_{k}\left(\omega\right),\tau_{k+1}\left(\omega\right)\right]}\left(t\right).
\]
In what follows, when $\tau$ is a stopping time and $\omega \in \Omega$ we will often write $\tau$ instead of $\tau(\omega)$.

For the simple strategy $H$ the corresponding \emph{integral process}
\begin{equation}
(H\cdot S)_{t}(\omega):=\sum_{k=0}^{\infty}h_{k}(\omega)\cdot(S_{\tau_{k+1}\wedge t}(\omega)-S_{\tau_{k}\wedge t}(\omega))=\sum_{n=0}^{\infty}h_{k}(\omega) \cdot S_{\tau_{k}\wedge t,\tau_{k+1}\wedge t}(\omega) \label{eq:step function integral}
\end{equation}
is well-defined for all $\omega\in\Omega$ and all $t\in[0,T]$; here ''$\cdot$'' denotes the scalar product on $\R^d$ and $S_{u,v}:=S_{v}-S_{u}$ for $u,v\in[0,T]$.
The family of simple strategies will be denoted by ${\mathcal H}.$ For
$\lambda>0$ a simple strategy $H$ will be called \emph{(strongly)
$\lambda$-admissible} if $(H\cdot S)_{t}(\omega)\ge-\lambda$ for
all $\omega\in\Omega$ and all $t\in[0,T]$. The set of strongly $\lambda$-admissible
simple strategies will be denoted by $\mathcal{H}_{\lambda}$.

\begin{definition} \emph{Vovk's outer measure} $\oP$ of a set
$A\subseteq\Omega$ is defined as the minimal superhedging price for
$\mathbf{1}_{A}$, that is
\[
\oP(A):=\inf\Big\{\lambda>0\,:\,\exists(H^{n})_{n\in\N}\subseteq\mathcal{H}_{\lambda}\text{ s.t. }\forall\omega\in\Omega\,:\,\liminf_{n\to\infty}(\lambda+(H^{n}\cdot S)_{T}(\omega))\ge\mathbf{1}_{A}(\omega)\Big\}.
\]
A set $A\subseteq\Omega$ is called a \emph{null set} if it has outer
measure zero. A property (P) holds for \emph{typical price paths}
if the set $A$ where (P) is violated is a null set. \end{definition}

This definition differs slightly from original Vovk's definition (see
for example \cite{Vovk_probability:2012}). A detailed account on the difference
of the original Vovk's outer measure and the just defined measure is presented
in \cite[Subsection 2.3]{PerkowskiProemel_integral:2015}. However,
since we will need here some continuity results of the model-free
integrals which were obtained for the just defined outer measure,
we will use this measure instead of original Vovk's outer measure.

Let us now introduce the notion of $n$th Lebesque partition $\pi^n$, $n \in \N$, which we will use in the sequel.   For $n\in \N$ we define  $\mathbb{D}^n:=\{k2^{-n}\,:\, k\in\mathbb{Z}\}$. For $f\in D\rbr{[0,T], \R}$, $\pi^n(f)$~consists of points in time at which the  path~$f$ crosses  (in space) a dyadic number from~$\mathbb{D}^n$ which is not the same as the dyadic number crossed (as the last number from~$\mathbb{D}^n$) at the  preceding time. This idea is made precise in the following definition.
\begin{definition}\label{def:Lebesgue partition}
  Let $n\in\mathbb{N}$ and let $\mathbb{D}^n$ be the {$n$} generation of dyadic numbers. For $ f\in D\rbr{[0,T], \R}$ its \emph{$n$th Lebesgue partition} $\pi^n(f):=\{\pi^n_k (f)\,:\, k \in \N_0 \}$ is given by the sequence of times $(\pi^n_k(f))_{k\in \mathbb{N}_0}$ inductively defined by
  \begin{equation*}
    \pi^n_{0}(f):=0\quad \text{and}\quad D^n_0(f):=\max \{\mathbb{D}^n\cap(-\infty,S_0(f)]\},
  \end{equation*}
and for every $k\in \mathbb{N}$ we further set
  \begin{align*}
    \pi^n_k(f) &:= \inf \{ t\in[\pi^n_{k-1}(f),T]\,:\, \llbracket f \rbr{\pi^n_{k-1}(f) }, f (t) \rrbracket \cap (\mathbb{D}^n\setminus \cbr{D^n_{k-1}(f)}) \ne \emptyset \},\\
    D^n_k(f) &:= \emph{argmin}_{D \in \llbracket f \rbr{\pi^n_{k-1}(f) }, f \rbr{\pi^n_{k}(f) } \rrbracket \cap (\mathbb{D}^n\setminus \cbr{D^n_{k-1}(f)})} |D - f \rbr{\pi^n_{k}(f)}|
     \end{align*}
  with the convention $\inf \emptyset = \infty$ and $ \llbracket u,v\rrbracket := [\min\rbr{u,v}, \max\rbr{u,v}]$.
\end{definition}
Next we define the sequence of Lebesgue partitions for $d$-dimensional c\`adl\`ag function $\omega\in \Omega$.
\begin{definition}\label{def:multi Lebesgue partition}
For $n\in \N$ and $\omega \in D\rbr{[0,T], \R^d}$ its \emph{Lebesgue partition} $\pi^n(\omega):=\{\pi^n_k (\omega)\,:\, k \in \N_0 \}$ is iteratively defined by $\pi^n_0(\omega):=0$  and
  \begin{equation*}
    \pi^n_k(\omega) := \min\bigg \{ t > \pi^n_{k-1}(\omega)\,:\, t \in \bigcup_{i=1}^d \pi^n(\omega^i)\cup  \bigcup_{i,j=1,i\neq j}^d \pi^n(\omega^i+\omega^j) \bigg\},\quad k\in \N,
  \end{equation*}
  where $\omega = (\omega^1,\dots,\omega^d)$ and $\pi^n(\omega^i)$ and $\pi^n(\omega^i +\omega^j)$ are the Lebesgue partitions of $\omega^i$ and $\omega^i+\omega^j$ as introduced in Definition~\ref{def:Lebesgue partition}, respectively.
\end{definition}
By \cite[Lemma 3]{Vovk_cadlag:2015} we know that for $n \in \N$, $\pi^n$ defined in Definition~\ref{def:multi Lebesgue partition} is indeed a family of stopping times and by \cite[Corollary 3.11]{LochPerkPro:2016} (and \cite{Vovk_cadlag:2015} for $\Omega=\tilde \Omega_{\psi}$) we know that
for typical price path $\omega\in\Omega$ the sequence of discrete quadratic (co)variations along the sequence of Lebesgue partitions:
\begin{equation} \label{discrete_qv}
Q_{t}^{i,j,n}(\omega):=\sum_{k=1}^{\infty}S_{\pi_{k-1}^{n}\wedge t, \pi_{k}^{n}\wedge t}^{i}(\omega)S_{\pi_{k-1}^{n}\wedge t, \pi_{k}^{n}\wedge t}^{j}(\omega),\quad t\in[0,T],
\end{equation}
converges for $i,j =1,2,\ldots, d$ in the uniform topology to some (c\`{a}dl\`{a}g) function $[0, T] \ni t \mapsto [S^{i},S^{j}]_t(\omega)$ which we will call the quadratic (co)variation of $S^i$ and $S^j$.

 We will use the following notation: $\left|[S]_{T}(\omega)\right|=\rbr{\sum_{i,j=1}^{d}[S^{i},S^{j}]_{T}^{2}(\omega)}^{1/2}.$

For $Z:\Omega\times\left[0,T\right]\ra\R^{r}$ ($r=1,2,\ldots$) let us define
\[
\left\Vert Z\left(\omega\right)\right\Vert _{\ns}:=\sup_{0\le t\le T}\left|Z_{t}\left(\omega\right)\right|,
\]
where $|\cdot |$ denotes the $l^2$ norm on $\R^r$.
Following \cite{LochPerkPro:2016}  we will identify two processes $X,Y:\Omega\times\left[0,T\right]\ra\R^{r}$
if
\[
\oP\left(\omega\in\Omega:\left\Vert X\left(\omega\right)-Y\left(\omega\right)\right\Vert _{\ns}>0\right)=0.
\]
This defines an equivalence relation, and we will write $\overline{L}_{0}\left(\R^{r}\right)$
(or $\overline{L}_{0}$ in short) for the space of its equivalence
classes.
We equip the space $\overline L_0(\R^r)$ with the distance
\begin{equation} \label{dinf_def}
  d_\infty(X,Y) := \overline E[\|X-Y\|_\infty \wedge 1],
\end{equation}
where $\overline E$ denotes an expectation operator defined for $Z\colon\Omega \rightarrow [0, \infty]$ by
\begin{equation*}
  \overline E[Z] := \inf \left\{\lambda > 0\,: \,\exists (H^n)_{n\in \N} \subseteq \mathcal{H}_{\lambda} \text{ s.t. } \forall \omega \in \Omega_\psi\,: \, \liminf_{n \to\infty} (\lambda + (H^n\cdot S)_T(\omega)) \ge Z(\omega) \right\}.
\end{equation*}
It can be shown that $(\overline L_0(\R^r), d_\infty)$ is a complete metric space and $(\overline{\mathcal{D}}(\R^r), d_\infty)$ is a closed subspace, where $\overline{\mathcal{D}}(\R^r)$ are those processes in $\overline L_0(\R^r)$ which have a c\`adl\`ag representative.\smallskip
\begin{definition} \label{conv_outer_set}
We will say that a sequence of $X_{n}\in\overline{L}_{0}$ \emph{converges
in the outer measure $\oP$ on a set $A\subset\Omega$ }to $X\in\overline{L}_{0}$
if for any $\varepsilon>0$
\[
\lim_{n\ra+\ns}\oP\left(\omega\in A:\left\Vert X_{n}\left(\omega\right)-X\left(\omega\right)\right\Vert _{\ns}>\varepsilon\right)=0.
\]
\end{definition}
For $q,M>0$ let us now define
\[
\Omega_{q,M}:=\left\{ \omega\in\Omega\,: |\left[S(\omega)\right]_{T} | \le q,\left\Vert S( \omega)\right\Vert _{\ns}\le M\right\} .
\]
It is easy to see that if for any $q,M>0,$ $X_{n}$
converges in outer measure $\oP$ on the set $\Omega_{q,M}$ to
$X\in\overline{L}_{0}$ then $X$ is unique.
For fixed $q,M>0$ let us now introduce the (pseudo-)distance $d_{\infty, q, M}$ on $\overline{L}_{0}$ which is given by
\begin{equation*}
d_{\infty, q,M}(X,Y):=\overline{E}[\Vert X-Y\Vert _{\infty}\wedge {\bf 1}_{\Omega_{q,M}}].
\end{equation*}
Now let us define for some fixed $\epsilon\in (0,1)$ the following metric on $\overline{L}_{0}$
\begin{equation}\label{eq:distance c\`{a}dl\`{a}g}
  d_{\infty,\psi}^{\epsilon}(X,Y):= \sum_{n,m=1}^\infty  2^{-(n/2+m)(1+\epsilon)} (\psi (2^m)\vee2^m)^{-1} d_{\infty,2^n,2^m}(X,Y)
\end{equation}
One of the main results of \cite{LochPerkPro:2016} is the following (see \cite[Theorem 4.2]{LochPerkPro:2016}).
\begin{theorem}\label{thm:integral}
  There exists two metric spaces $(\overline{H}_1, d_{\overline{H}_1})$ and $(\overline{H}_2, d_{\overline{H}_2})$ such that the (equivalence classes of) step functions (simple strategies) are dense in $\overline{H}_1$, $\overline{H}_2$ embedds into $\overline{\mathcal{D}}(\R^d)$ and the integral map $I \colon H \mapsto (H \cdot S) = \rbr{\int_{(0, t]} H_s \dd S_s}_{t \in [0, T]}$, defined for simple strategies in~\eqref{eq:step function integral}, has a continuous extension that maps $(\overline{H}_1, d_{\overline{H}_1})$ to $(\overline{H}_2, d_{\overline{H}_2})$. Moreover, one has the following continuity estimate
     \begin{equation}\label{eq:continuity c\`{a}dl\`{a}g}
   d_{\infty,\psi}^{\epsilon}((F\cdot S),(G\cdot S))
    \lesssim d_{\infty}(F,G)^{1/3},
\end{equation}
and one can take $d_{\overline{H}_1}=d_{\infty}$ and $d_{\overline{H}_2}=d_{\infty,\psi}^{\epsilon}$ (defined in~\eqref{eq:distance c\`{a}dl\`{a}g}). 
\end{theorem}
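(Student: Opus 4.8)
The plan is to establish the quantitative bound \eqref{eq:continuity c\`{a}dl\`{a}g} for \emph{simple} strategies $F,G$ first and then read off the metric spaces and the continuous extension almost for free. Concretely, I would take $\overline H_1$ to be the completion of the equivalence classes of simple strategies under $d_\infty$ and $\overline H_2$ to be the $d_{\infty,\psi}^{\epsilon}$-closure of the family of their integral processes. Once \eqref{eq:continuity c\`{a}dl\`{a}g} is known on simple strategies, the map $I$ is Hölder-uniformly continuous on a $d_\infty$-dense subset of the complete space $(\overline L_0,d_\infty)$, hence extends uniquely and continuously; since every $(F\cdot S)$ with $F$ simple is c\`adl\`ag and $d_{\infty,\psi}^{\epsilon}$-convergence forces convergence in the outer measure $\oP$ on each $\Omega_{2^n,2^m}$, along which c\`adl\`ag representatives are preserved, the image lies in $\overline{\mathcal D}(\R^d)$.

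Thus the heart of the matter is \eqref{eq:continuity c\`{a}dl\`{a}g} itself. Writing $d:=d_\infty(F,G)=\overline E[\|F-G\|_\infty\wedge 1]$ and recalling that $d_{\infty,\psi}^{\epsilon}$ is the weighted sum \eqref{eq:distance c\`{a}dl\`{a}g} of the pseudo-distances $d_{\infty,2^n,2^m}$, it suffices to prove the per-scale estimate
\[
 d_{\infty,2^{n},2^{m}}\big((F\cdot S),(G\cdot S)\big)\;\lesssim\;\big(\psi(2^{m})\vee 2^{m}\big)\,2^{n/3}\,d^{1/3},
\]
since multiplying by the weights $2^{-(n/2+m)(1+\epsilon)}(\psi(2^m)\vee 2^m)^{-1}$ and summing over $n,m$ yields a convergent series: the factor $(\psi(2^m)\vee 2^m)^{-1}$ cancels the $M$-dependent constant, while $2^{-n(1+\epsilon)/2}2^{n/3}$ and $2^{-m(1+\epsilon)}$ are summable. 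This is precisely why the weights in \eqref{eq:distance c\`{a}dl\`{a}g} are chosen as they are.

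To obtain the per-scale estimate I would fix $q=2^n$, $M=2^m$ and split according to the size of the integrand difference. For a threshold $\alpha\in(0,1)$, on the event $\{\|F-G\|_\infty>\alpha\}$ I bound the integrand factor in $d_{\infty,q,M}$ trivially by $1$ and apply the Chebyshev inequality for $\overline E$ together with the pointwise bound $x^2\wedge 1\le x\wedge 1$, giving $\overline E[\|F-G\|_\infty^2\wedge 1]\le d$ and hence a contribution $\lesssim d/\alpha^2$. On the complementary event I replace $F-G$ by the simple strategy $H^\alpha$ that zeroes those coefficients $h_k$ with $|h_k|>\alpha$ (a legitimate simple strategy, the decision being $\mathcal F_{\tau_k}$-measurable, which agrees with $F-G$ on $\{\|F-G\|_\infty\le\alpha\}$), so that $\|H^\alpha\|_\infty\le\alpha$. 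Here the model-free Burkholder--Davis--Gundy inequality (the pathwise BDG of \cite{Beiglboeck:2015}, in the version adapted to c\`adl\`ag paths with restricted downward jumps) controls $\overline E[\|(H^\alpha\cdot S)\|_\infty^2\wedge 1\cdot\mathbf 1_{\Omega_{q,M}}]$ by the quadratic variation of the integral, which on $\Omega_{q,M}$ satisfies $[(H^\alpha\cdot S)]_T=\int_{0}^{T}(H^{\alpha})^{2}\,\dd{[S]}\lesssim\|H^\alpha\|_\infty^2\,|[S]_T|\le\alpha^2 q$; a Cauchy--Schwarz step then yields a contribution $\lesssim\alpha\,q^{1/2}$. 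Adding the two pieces gives $d_{\infty,q,M}((F\cdot S),(G\cdot S))\lesssim\alpha\,2^{n/2}+\alpha^{-2}d$, and optimizing over $\alpha\asymp(2^{-n/2}d)^{1/3}$ produces the claimed $2^{n/3}d^{1/3}$, with the $M$-dependence entering through the admissibility level of the hedging strategies.

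I expect the genuine obstacle to be the model-free BDG step in the downward-jump setting: one must construct admissible superhedging strategies for the running maximum of $(H^\alpha\cdot S)$ whose cost is controlled on $\Omega_{q,M}$, where upward jumps of $S$ are a priori unbounded (only indirectly controlled through $|[S]_T|\le q$) and where the integral $(H^\alpha\cdot S)$ need not itself be admissible. Making the passage from the pathwise hedging inequalities of \cite{Beiglboeck:2015} to the outer-expectation bound uniform across the scales $n,m$, and verifying that the resulting $M$-dependent constants are dominated by $\psi(2^m)\vee 2^m$ so that the weighted series converges, is where most of the work lies; the density and completeness bookkeeping at the end is then routine given the completeness of $(\overline L_0,d_\infty)$ and the closedness of $\overline{\mathcal D}(\R^d)$ recalled above.
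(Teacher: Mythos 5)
First, note that this paper does not prove Theorem \ref{thm:integral} at all: it is imported verbatim from \cite[Theorem 4.2]{LochPerkPro:2016}, so there is no internal proof to compare yours against. Measured against the proof in that reference, your outline reproduces the correct architecture --- establish the continuity estimate \eqref{eq:continuity c\`{a}dl\`{a}g} for simple strategies, extend by density together with completeness of $(\overline L_0,d_\infty)$ and closedness of $\overline{\mathcal{D}}(\R^d)$, and obtain the per-scale bound by splitting on $\cbr{\|F-G\|_\infty>\alpha}$ (a Markov/Chebyshev step) versus its complement (a model-free maximal inequality for integrands bounded by $\alpha$) and optimizing in $\alpha$. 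The exponent $1/3$ does come out of exactly such a balance, and the weights in \eqref{eq:distance c\`{a}dl\`{a}g} are indeed designed to absorb the $(\sqrt q+M)\psi(M)$-type constants.

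However, as a proof the proposal has a genuine gap, and it is the one you yourself flag: the ``model-free BDG step''. The bound $\overline E\big[\|(H^\alpha\cdot S)\|_\infty^2\wedge 1\cdot\mathbf 1_{\Omega_{q,M}}\big]\lesssim\alpha^2 q$ is not a citation of \cite{Beiglboeck:2015}; it is the main technical content of the theorem. The Beiglb\"ock--Siorpaes inequality is a purely deterministic, discrete-time statement, and converting it into a bound on the superhedging functional $\overline E$ requires exhibiting \emph{strongly $\lambda$-admissible} simple strategies that superhedge on all of $\Omega$, not merely on $\Omega_{q,M}$. Admissibility fails naively because off $\Omega_{q,M}$ neither $\|S\|_\infty$ nor $|[S]_T|$ is bounded and the upward jumps of $S$ are unrestricted; the cited proof handles this by stopping the hedging portfolio when the price or the discrete quadratic variation first exceeds the thresholds $M$, $q$ and by controlling the overshoot at that stopping time via the downward-jump condition $\Delta S^i\ge-\psi(\sup|S|)$ --- this is precisely the origin of the constant $(1+3dM+2d\psi(M))(6\sqrt q+2+2M)$ in Corollary \ref{cor:c\`{a}dl\`{a}g concentration estimate}, and none of it is supplied in your sketch. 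Two smaller assertions are also used without justification: the identity $[(H^\alpha\cdot S)]_T=\int_0^T(H^\alpha)^2\,\dd{[S]}$ for the model-free quadratic variation, and the Cauchy--Schwarz inequality for the sublinear, non-additive functional $\overline E$. Both are true but need proof in this framework. In short: correct plan, correct exponent, but the theorem's actual difficulty is left as an acknowledged black box.
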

The metric space~$(\overline{H}_1, d_{\overline{H}_1})$ can be chosen to contain the left-continuous versions of adapted c\`adl\`ag processes (see \cite[Remark 4.3]{LochPerkPro:2016}). If we replace the filtration $\left({\mathcal F}_{t}\right)_{t\in\left[0,T\right]}$ by its right-continuous version, we can define $(\overline{H}_1, d_{\overline{H}_1})$ such that it contains at least the c\`agl\`ad adapted processes and, furthermore, such that if $(F_n)_{n \in \N} \subset \overline{H}_1$ is a sequence with $\sup_{\omega \in \Omega} \| F_n(\omega) - F(\omega) \|_{\infty} \to 0$ as $n \ra +\ns$, then $F\in \overline{H}_1$ and there exists a subsequence $(F_{n_k})$ with
    \begin{equation*}
      \lim_{k \to \infty} \| (F_{n_k} \cdot S)(\omega) - (F \cdot S)(\omega) \|_\infty = 0
    \end{equation*}
    for typical price paths $\omega\in \Omega.$
By \cite[Corollary 4.9]{LochPerkPro:2016} and (\ref{eq:continuity c\`{a}dl\`{a}g}) we also get
\begin{corollary}\label{cor:c\`{a}dl\`{a}g concentration estimate}
 For $a,q,m,M > 0$ and any $H\in{\overline{H}_1}$ one has
  \begin{align*}
    \oP\big(\big\{\|(H\cdot S)\|_\infty \geq a \big\} \cap\big\{\|H\|_\infty \leq m\big\}\cap \big\{ |[S]_T|\leq q \big\}
    &\cap\big \{ \Vert S \Vert_{\infty} \leq M \big\} \big) \\
    &\leq  (1+ 3 d M +2 d \psi (M))\frac{6\sqrt{q}+2+2M}{a}m.
  \end{align*}
\end{corollary}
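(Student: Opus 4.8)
The plan is to first establish the estimate for simple strategies and then transport it to all of $\overline{H}_1$ by density together with the continuity estimate of Theorem~\ref{thm:integral}. Throughout I abbreviate $Y:=(H\cdot S)$ and write $B:=\{\|H\|_\infty\le m\}\cap\{|[S]_T|\le q\}\cap\{\|S\|_\infty\le M\}$, so the event in question is $A=\{\|Y\|_\infty\ge a\}\cap B$. Since on $B$ the integrand is bounded by $m$, I would first replace $H$ by its radial truncation $H^m$ (with $|H^m|\le m$ everywhere and $H^m=H$ on $\{\|H\|_\infty\le m\}$); this is again a simple strategy, does not change the integral on $B$, and only enlarges the target event, so it suffices to bound $\oP(\{\|(H^m\cdot S)\|_\infty\ge a\}\cap\{|[S]_T|\le q\}\cap\{\|S\|_\infty\le M\})$ for a strategy bounded by $m$. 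This truncation is what makes the subsequent hedging strategy admissible, since it removes the possibility of unbounded positions.

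The core step is a superhedging construction based on the pathwise Burkholder--Davis--Gundy inequality of \cite{Beiglboeck:2015}. That inequality furnishes an explicit predictable integrand $G$, bounded along every path, such that for the c\`adl\`ag path $Y$ one has $\sup_{t\le T}|Y_t|\le 6\sqrt{[Y]_T}+(G\cdot Y)_T$; the universal constant $6$ is exactly the origin of the factor $6\sqrt{q}$ in the claimed bound. Since $(G\cdot Y)=((GH^m)\cdot S)$ is itself a model-free integral with a bounded integrand, the strategy $\tfrac1a\,GH^m$, stopped at the first time $\rho$ at which the path leaves the good set $\{|[S]|\le q\}\cap\{\|S\|\le M\}$ (or $|Y|$ first reaches $a$), becomes admissible after subtracting the appropriate initial capital. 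On $A$ the stopped integral pays at least $1$, because $\sup|Y|\ge a$ while $[Y]_T$ is controlled by $m^2|[S]_T|$ together with a $d$-dependent factor coming from summing the $d$ diagonal and $d(d-1)$ cross (co)variations $[S^i,S^j]$ built from $\omega^i+\omega^j$ in Definition~\ref{def:multi Lebesgue partition}. Reading off the minimal initial capital from the definition of $\oP$ then yields a bound of the announced shape; the prefactor $1+3dM+2d\psi(M)$ and the additive corrections $2+2M$ arise from controlling, at $\rho$, the overshoot produced by a jump of $S$ and of $[S]$, where the downward jumps are bounded by $\psi(\sup_{s<\rho}|S_s|)\le\psi(M)$ and $\|S\|_\infty\le M$ bounds the level.

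To pass from simple strategies to a general $H\in\overline{H}_1$ I would approximate $H$ by simple strategies $H_k$ with $d_\infty(H_k,H)\to0$, which is possible by density. The continuity estimate \eqref{eq:continuity c\`{a}dl\`{a}g} gives $d_{\infty,\psi}^\epsilon((H_k\cdot S),(H\cdot S))\lesssim d_\infty(H_k,H)^{1/3}\to0$, and since $d_{\infty,\psi}^\epsilon$ dominates a positive multiple of each $d_{\infty,2^n,2^m}$, this forces $\overline{E}[\|(H_k\cdot S)-(H\cdot S)\|_\infty\wedge\mathbf 1_{\Omega_{q,M}}]\to0$; by a Markov inequality this is convergence in outer measure on every $\Omega_{q,M}$ in the sense of Definition~\ref{conv_outer_set}. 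A routine $\varepsilon$-room argument then transfers the estimate: for $\varepsilon\in(0,1)$ one bounds $\oP(\{\|Y\|_\infty\ge a\}\cap B)$ by $\oP(\{\|(H_k\cdot S)\|_\infty\ge a-\varepsilon\}\cap B)$ plus the outer measure of the set in $\Omega_{q,M}$ where the two integrals differ by more than $\varepsilon$, and then lets $k\to\infty$ and $\varepsilon\downarrow0$. The constants are unchanged because they do not depend on $H$.

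The step I expect to be the main obstacle is the admissibility-and-overshoot bookkeeping in the core construction. Because only the downward jumps of $S$ are restricted, $Y$ may jump upward past the level $a$, and the stopped hedging strategy must stay bounded below by the subtracted capital while still paying at least $1$ on $A$; extracting the precise additive and multiplicative corrections (the $2+2M$ and the $1+3dM+2d\psi(M)$) from the jumps of $[S]$ and of $Y$ at the stopping time $\rho$, uniformly over all paths, is the delicate part. By contrast, the density-and-continuity extension is comparatively routine once convergence in outer measure on $\Omega_{q,M}$ is in hand.
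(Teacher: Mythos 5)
You should first note that the paper does not actually prove this corollary: it is imported wholesale, with the sentence ``By [\L ochowski--Perkowski--Pr\"omel, Corollary 4.9] and the continuity estimate \eqref{eq:continuity c\`{a}dl\`{a}g} we also get\ldots''. Your two-stage architecture --- a pathwise BDG-based superhedging argument for bounded simple strategies followed by a density/continuity extension to all of $\overline{H}_1$ --- is indeed the argument that underlies the cited Corollary 4.9, so in spirit you have reconstructed the right route, and your closing density step (approximate $H$ by simple strategies $H_k$ with $d_\infty(H_k,H)\to 0$, use \eqref{eq:continuity c\`{a}dl\`{a}g} and the fact that $d_{\infty,\psi}^{\epsilon}$ dominates a positive multiple of $d_{\infty,2^n,2^m}$ to get convergence in outer measure on $\Omega_{q,M}$, then an $\varepsilon$-room argument) is sound.

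However, as a standalone proof your proposal has a genuine gap precisely where the content of the corollary lies: the explicit constant $(1+3dM+2d\psi(M))\frac{6\sqrt{q}+2+2M}{a}m$ is asserted, not derived. You state yourself that ``extracting the precise additive and multiplicative corrections \ldots is the delicate part'' and then do not carry it out. Concretely, three things are missing. First, the Beiglb\"ock--Siorpaes inequality is a discrete-time statement; to apply it to $Y=(H^m\cdot S)$ you must pass to the limit along the Lebesgue partitions, which requires knowing that $[Y]$ exists and satisfies a quantitative bound of the form $[Y]_T\le C(d)\,m^2\,|[S]_T|$ on the good set --- you gesture at this but do not prove it, and it is not automatic since $Y$ is itself only defined as a limit of integrals. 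Second, the admissibility of the hedging strategy $\tfrac1a GH^m$ stopped at $\rho$ depends crucially on the one-sided jump restriction defining $\Omega_\psi$; bounding the overshoot of $Y$, of $[S]$ and of $\|S\|$ at $\rho$ is exactly where the factors $\psi(M)$, $M$ and the additive $2+2M$ enter, and without writing out the stopped wealth process path by path one cannot verify that the initial capital $(1+3dM+2d\psi(M))\frac{6\sqrt{q}+2+2M}{a}m$ suffices and that the strategy stays above $-\lambda$. Third, the superhedge must pay at least $\mathbf 1_A$ on \emph{all} of $\Omega$, not only on the good set, so you must also check that stopping at the exit time of $\{|[S]|\le q\}\cap\{\|S\|\le M\}$ keeps the wealth nonnegative on the complement. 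None of these steps is routine, and together they constitute the entire quantitative content of the statement; the parts you do execute (truncation, density, $\varepsilon$-room) are the easy ones.
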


\section{Quadratic variation of model-free c\`{a}dl\`{a}g price paths }

\subsection{Integration by parts formula for model-free price paths}

The existence of quadratic variation for typical $\omega\in\Omega$ in the sense outlined in the previous section is equivalent to the existence of quadratic
variation in the sense of Norvai\v{s}a \cite[Proposition~3]{Vovk_cadlag:2015}.
This also allows (see \cite[Sect. 7]{Vovk_cadlag:2015}) to apply
F\"{o}llmer's pathwise It\^{o} formula~\cite{Foellmer:1981} for typical
 price paths in $\Omega$ and in particular to define the pathwise
integral $\int_{(0, T]} f^{\prime}(\omega\rbr{s-})\dd \omega\rbr{s}$ (along the sequence of the Lebesgue partitions) for $f\in C^{2}\left(\R^{d},\R\right)$
when $d=1.$ To define the pathwise integral $\int_{(0, T]} \nabla f(\omega\rbr{s-})\dd \omega\rbr{s}$
when $d=2,3,\ldots$ and $f\in C^{2}\left(\R^{d},\R\right)$ or $f$
is even more general, path-dependent functional, one may use results
of Cont and Fourni\'{e} \cite{ContFournie:2010}.
We also have the following result which we will use to prove the next lemma.
\begin{lem} \label{fin_var_lemma}
Let $\omega,\tilde{\omega}:\left[0,T\right]\ra\R$ be two c\`{a}dl\`{a}g paths.
Assume that $\tilde{\omega}$ has finite total variation and let
us consider two integrals
\begin{enumerate}
\item the integral $\int_{(0,T]}\omega\left(t-\right)\dd{\tilde{\omega}\left(t\right)}$
understood as the Lebesgue-Stieltjes integral (with respect to the
measure $\dd{\tilde{\omega}}$ given by $\dd{\tilde{\omega}(a,b]}:=\tilde{\omega}\left(b\right)-\tilde{\omega}\left(a\right)$);
\item the integral $(F)\int_{(0,T]}\omega\left(t-\right)\dd{\tilde{\omega}\left(t\right)}$
understood as F\"{o}llmer's integral along the sequence of partitions $\left(\tau^{n}\right)_{n \in \N}$ such that for $n\in \N$, $\tau^n$ contains $n$th Lebesgue partition of the path $\omega$,
i.e.
\[
(F)\int_{(0,T]}\omega\left(t-\right)\dd{\tilde{\omega}\left(t\right)}:=\lim_{n\ra+\ns}\sum_{i=1}^{k_{n}}\omega\left(\tau_{i-1}^{n}\right)\left\{ \tilde{\omega}\left(\tau_{i}^{n}\right)-\tilde{\omega}\left(\tau_{i-1}^{n}\right)\right\} ,
\]
where
$
\tau^{n}:=\left\{ 0=\tau_{0}^{n}<\tau_{1}^{n}<\ldots<\tau_{k_{n}-1}^{n}<T = \tau_{k_{n}}^{n} < + \infty = \tau_{k_{n}+1}^{n} =  \tau_{k_{n}+2}^{n} = \ldots \right\}
$
and $\pi^n(\omega) \subset \tau^n$.
\end{enumerate}
These two integrals coincide, provided that the latter exists.
\end{lem}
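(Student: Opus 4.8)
The plan is to recognise each approximating Föllmer sum as an honest Lebesgue--Stieltjes integral of a step function, and then to prove that these step functions converge to $\omega\rbr{\cdot-}$ in the supremum norm, the latter being forced by the defining oscillation property of the Lebesgue partition. Write $S_n:=\sum_{i=1}^{k_{n}}\omega\rbr{\tau_{i-1}^{n}}\cbr{\tilde{\omega}\rbr{\tau_{i}^{n}}-\tilde{\omega}\rbr{\tau_{i-1}^{n}}}$ for the $n$th Föllmer sum and let $g_{n}\colon(0,T]\ra\R$ be the step function with $g_{n}(t):=\omega\rbr{\tau_{i-1}^{n}}$ for $t\in(\tau_{i-1}^{n},\tau_{i}^{n}]$. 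Since $0=\tau_{0}^{n}<\dots<\tau_{k_{n}}^{n}=T$ splits $(0,T]$ into the disjoint blocks $(\tau_{i-1}^{n},\tau_{i}^{n}]$, and $\dd\tilde{\omega}$ is a finite signed measure on $(0,T]$ (as $\tilde{\omega}$ has finite total variation) assigning mass $\tilde{\omega}(\tau_{i}^{n})-\tilde{\omega}(\tau_{i-1}^{n})$ to each block, one has the exact identity $S_{n}=\int_{(0,T]}g_{n}(t)\,\dd\tilde{\omega}(t)$. The whole matter is thereby reduced to controlling $g_{n}-\omega\rbr{\cdot-}$ uniformly on $(0,T]$.

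The structural input I would use is twofold. First, $\tau^{n}$ \emph{refines} $\pi^{n}(\omega)$, so every block $(\tau_{i-1}^{n},\tau_{i}^{n}]$ is contained in a single Lebesgue interval $(\pi_{k-1}^{n}(\omega),\pi_{k}^{n}(\omega)]$. Second, the oscillation estimate inherent in Definition~\ref{def:Lebesgue partition}: on each half-open interval $[\pi_{k-1}^{n}(\omega),\pi_{k}^{n}(\omega))$ the path $\omega$ stays inside the dyadic band $\rbr{D_{k-1}^{n}(\omega)-2^{-n},\,D_{k-1}^{n}(\omega)+2^{-n}}$, hence has oscillation at most $2^{1-n}$. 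I would record this band estimate first (it follows from the construction, since no new element of $\mathbb{D}^{n}$ is crossed strictly before $\pi_{k}^{n}(\omega)$), noting that it holds uniformly in $k$ because the band width $2^{1-n}$ is independent of $k$.

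Now fix $t\in(0,T]$ and let $(\tau_{i-1}^{n},\tau_{i}^{n}]$ be the block containing it, so $g_{n}(t)=\omega\rbr{\tau_{i-1}^{n}}$, with $\tau_{i-1}^{n}\in[\pi_{k-1}^{n}(\omega),\pi_{k}^{n}(\omega))$. If $t<\pi_{k}^{n}(\omega)$, then $t$ lies in the same band and $|g_{n}(t)-\omega\rbr{t-}|\le 2^{1-n}$ directly. The delicate case, and the only genuine obstacle, is $t=\pi_{k}^{n}(\omega)$: here $\omega$ may have a large (downward) jump at $t$, so one must \emph{not} compare with $\omega(t)$ but with the left limit $\omega\rbr{t-}=\lim_{s\uparrow t}\omega(s)$, which is a limit of values of $\omega$ in the band $[\pi_{k-1}^{n}(\omega),\pi_{k}^{n}(\omega))$ and therefore lies in its closure; thus again $|g_{n}(t)-\omega\rbr{t-}|\le 2^{1-n}$. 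This is precisely why the Lebesgue partition --- which inserts a partition point at every jump exceeding $2^{-n}$ --- is the correct family to integrate against, and it is also why the time mesh of $\tau^{n}$ need not tend to $0$: it is the \emph{spatial} oscillation of the integrand, not the temporal mesh, that governs integration against a finite-variation measure. The conclusion is $\|g_{n}-\omega\rbr{\cdot-}\|_{\infty}\le 2^{1-n}$.

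Finally, letting $V$ denote the total variation of $\tilde{\omega}$ on $[0,T]$,
\[
\Big|S_{n}-\int_{(0,T]}\omega\rbr{t-}\,\dd\tilde{\omega}(t)\Big|
=\Big|\int_{(0,T]}\rbr{g_{n}(t)-\omega\rbr{t-}}\,\dd\tilde{\omega}(t)\Big|
\le 2^{1-n}\,V,
\]
which tends to $0$ as $n\ra+\ns$. Hence $S_{n}$ converges to the Lebesgue--Stieltjes integral $\int_{(0,T]}\omega\rbr{t-}\,\dd\tilde{\omega}(t)$. Since the Föllmer integral is by definition $\lim_{n\ra+\ns}S_{n}$, whenever it exists it must coincide with this limit, giving the asserted equality. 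I expect no further difficulties; in fact the estimate shows $S_{n}$ always converges, so the hypothesis that the Föllmer integral exists is used only to identify its value.
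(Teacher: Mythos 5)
Your proof is correct, and it takes a genuinely more direct route than the paper's. The paper first approximates $\omega$ uniformly by a step function $\omega^{\varepsilon}$ whose jump times $t_i$ are the successive times at which $\omega$ moves by more than $\varepsilon$, controls the resulting error in both the Lebesgue--Stieltjes integral and the Riemann sums by $\varepsilon$ times the total variation of $\tilde\omega$, and then --- this is its key step --- shows that the Riemann sums of the fixed step function $\omega^{\varepsilon}$ converge to its Lebesgue--Stieltjes integral by a \emph{temporal} localization argument: the first partition point $\tau^{n}(t_i)$ at or after $t_i$ converges to $t_i$, the discrepancy collapses to a finite sum of terms $\left(\omega^{\varepsilon}(t_{i-1})-\omega^{\varepsilon}(t_{i-2})\right)\left(\tilde\omega(\tau^{n}(t_{i-1}))-\tilde\omega(t_{i-1})\right)$, and these vanish by right-continuity of $\tilde\omega$; one then lets $\varepsilon\to 0$. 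You instead exploit the \emph{spatial} defining property of the Lebesgue partition: for $s,t\in[\pi^{n}_{k-1}(\omega),\pi^{n}_{k}(\omega))$ the interval $\llbracket \omega(\pi^{n}_{k-1}(\omega)),\omega(s)\rrbracket\cup\llbracket \omega(\pi^{n}_{k-1}(\omega)),\omega(t)\rrbracket$ meets $\mathbb{D}^{n}$ in at most the single point $D^{n}_{k-1}(\omega)$, hence has length less than $2^{1-n}$, which yields $\Vert g_{n}-\omega(\cdot-)\Vert_{\infty}\le 2^{1-n}$ in one stroke; your handling of the refinement (each $\tau$-block sits inside a single Lebesgue block) and of the left limit at $t=\pi^{n}_{k}(\omega)$ addresses exactly the points that need care. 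One small imprecision: the band $\left(D^{n}_{k-1}(\omega)-2^{-n},D^{n}_{k-1}(\omega)+2^{-n}\right)$ need not contain $\omega(\pi^{n}_{k-1}(\omega))$ in all edge cases (the argmin in Definition~\ref{def:Lebesgue partition} can be forced to the second-nearest dyadic when the nearest one coincides with $D^{n}_{k-2}(\omega)$), but the oscillation bound of order $2^{-n}$ survives via the ``at most one dyadic in the swept interval'' argument, so nothing breaks. What your approach buys is a quantitative rate ($2^{1-n}$ times the total variation of $\tilde\omega$) and the explicit observation that the F\"{o}llmer sums converge unconditionally, so the existence hypothesis serves only to identify the limit; the paper's route avoids any oscillation estimate on Lebesgue blocks but pays for it with a two-parameter $(\varepsilon,n)$ limiting argument.
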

\begin{proof}
Step 1. First, let us notice that for any $\varepsilon>0$ we may uniformly approximate $\omega$ by
a step function
\[
\omega^{\varepsilon}\left(t\right):=\sum_{i=1}^{N}\omega\left(t_{i-1}\right)\mathbf{1}_{[t_{i-1},t_{i})}\left(t\right)+\omega\left(T\right)\mathbf{1}_{\left\{ T\right\} }\left(t\right),
\]
where $0=t_{0}<t_{1}<\ldots<t_{N}=T,$ such that
\[
\left\Vert \omega^{\varepsilon}-\omega\right\Vert _{\ns}\le\varepsilon.
\]
To obtain such $\omega^{\varepsilon}$ we simply define $t_{0}:=0,$
$t_{i}:=\inf\left\{ t>t_{i-1}:\left|\omega\left(t\right)-\omega\left(t_{i-1}\right)\right|>\varepsilon\right\} $
for $i=1,2,\ldots$ such that $t_{i-1}<+\ns$ (we apply the convention
that $\inf\emptyset=+\ns$) and $N:=\max\left\{ i\in\left\{ 1,2,\ldots\right\} :t_{i-1}<T\right\} .$

Step 2. We have the estimate
\begin{eqnarray}
 &  & \left|\int_{(0,T]}\omega\left(t-\right)\dd{\tilde{\omega}\left(t\right)}-\int_{(0,T]}\omega^{\varepsilon}\left(t-\right)\dd{\tilde{\omega}\left(t\right)}\right|\nonumber \\
 &  & \le\int_{(0,T]}\left|\omega\left(t-\right) - \omega^{\varepsilon}\left(t-\right) \right|\left|\dd{\tilde{\omega}\left(t\right)}\right|\nonumber \\
 &  & \le\varepsilon\int_{(0,T]}\left|\dd{\tilde{\omega}\left(t\right)}\right|=\varepsilon\TTV{\tilde{\omega}}{\left[0,T\right]}{},\label{eq:one}
\end{eqnarray}
where $\TTV{\tilde{\omega}}{\left[0,T\right]}{}$ denotes the total variation of $\tilde{\omega}.$
Moreover
\[
\int_{(0,T]}\omega^{\varepsilon}\left(t-\right)\dd{\tilde{\omega}\left(t\right)}=\sum_{i=1}^{N}\omega^{\varepsilon}\left(t_{i-1}\right)\left(\tilde{\omega}\left(t_{i}\right)-\tilde{\omega}\left(t_{i-1}\right)\right).
\]
 We also have
\begin{eqnarray}
 &  & \left|\sum_{i=1}^{k_{n}}\omega\left(\tau_{i-1}^{n}\right)\left\{ \tilde{\omega}\left(\tau_{i}^{n}\right)-\tilde{\omega}\left(\tau_{i-1}^{n}\right)\right\} -\sum_{i=1}^{k_{n}}\omega^{\varepsilon}\left(\tau_{i-1}^{n}\right)\left\{ \tilde{\omega}\left(\tau_{i}^{n}\right)-\tilde{\omega}\left(\tau_{i-1}^{n}\right)\right\} \right|\nonumber \\
 &  & \le\sum_{i=1}^{k_{n}}\left|\omega\left(\tau_{i-1}^{n}\right)-\omega^{\varepsilon}\left(\tau_{i-1}^{n}\right)\right|\left|\tilde{\omega}\left(\tau_{i}^{n}\right)-\tilde{\omega}\left(\tau_{i-1}^{n}\right)\right|\nonumber \\
 &  & \le\varepsilon\TTV{\tilde{\omega}}{\left[0,T\right]}{}.\label{eq:two}
\end{eqnarray}

Step 3. Now let us consider the difference
\begin{eqnarray}
 &  & \int_{(0,T]}\omega^{\varepsilon}\left(t-\right)\dd{\tilde{\omega}\left(t\right)}-\sum_{i=1}^{k_{n}}\omega^{\varepsilon}\left(\tau_{i-1}^{n}\right)\left\{ \tilde{\omega}\left(\tau_{i}^{n}\right)-\tilde{\omega}\left(\tau_{i-1}^{n}\right)\right\} \nonumber \\
 &  & =\sum_{i=1}^{N}\omega^{\varepsilon}\left(t_{i-1}\right)\left(\tilde{\omega}\left(t_{i}\right)-\tilde{\omega}\left(t_{i-1}\right)\right)-\sum_{i=1}^{k_{n}}\omega^{\varepsilon}\left(\tau_{i-1}^{n}\right)\left\{ \tilde{\omega}\left(\tau_{i}^{n}\right)-\tilde{\omega}\left(\tau_{i-1}^{n}\right)\right\} .\label{eq:difference}
\end{eqnarray}

Let $\tau^{n}\left(t\right)$ denotes the first point $\tau_i^{n}$ in the partition
$\tau^{n}$ such that $\tau_i^{n}\ge t.$ From the definition
of the $n$th Lebesgue partition of $\omega$ it follows that for $t<T$
\[
\limsup_{n\ra+\ns}\tau^{n}\left(t\right) \le \inf\left\{ u>t:\omega\left(u\right)\neq\omega\left(t-\right)\right\} .
\]
By the definition of times $t_{1},t_{2},\ldots,t_{N-1}$ we have that
for any $t\in\left\{ t_{1},t_{2},\ldots,t_{N-1}\right\} ,$ $\omega\left(t\right)\neq\omega\left(t-\right)$
or $\omega\left(t\right)=\omega\left(t-\right)$ but $\omega$ is
not constant on any interval of the form $\left[t,u\right],$ $u\in\left(t,T\right]$
and $\lim_{n\ra+\ns}\pi^{n}\left(t\right)=t.$ Thus, for sufficiently
large $n$s
\[
t_{i}\le\tau^{n}\left(t_{i}\right)<t_{i+1} \text{ for } i=1,2,\ldots,N-1.
\]

Now, denoting by $k_{n}\left(t\right)$ such index that
$\tau^{n}\left(t\right)=\tau_{k_{n}\left(t\right)}^{n}$, for sufficiently
large $n$s and $i=2,\ldots,N$ we have $t_{i-2}\le\tau_{k_{n}\left(t_{i-1}\right)-1}^{n}<t_{i-1}$. Thus
$\omega^{\varepsilon}\left(\tau_{k_{n}\left(t_{i-1}\right)-1}^{n}\right)=\omega^{\varepsilon}\left(t_{i-2}\right)$ and since $\omega^{\varepsilon}$ is constant on $[t_{i-1},t_{i}),$
$i=1,2,\ldots,N,$
we obtain
\begin{eqnarray*}
 &  & \sum_{i=1}^{N}\omega^{\varepsilon}\left(t_{i-1}\right)\left(\tilde{\omega}\left(t_{i}\right)-\tilde{\omega}\left(t_{i-1}\right)\right)-\sum_{i=1}^{k_{n}}\omega^{\varepsilon}\left(\tau_{i-1}^{n}\right)\left\{ \tilde{\omega}\left(\tau_{i}^{n}\right)-\tilde{\omega}\left(\tau_{i-1}^{n}\right)\right\} \\
 &  & =\sum_{i=2}^{N}\left(\omega^{\varepsilon}\left(t_{i-1}\right)-\omega^{\varepsilon}\left(t_{i-2}\right)\right)\left(\tilde{\omega}\left(\tau^{n}\left(t_{i-1}\right)\right)-\tilde{\omega}\left(t_{i-1}\right)\right).
\end{eqnarray*}
Since $\lim_{n\ra+\ns}\tau^{n}\left(t\right)=t$ for $t\in\left\{ t_{1},t_{2},\ldots,t_{N-1}\right\} $
and $\tilde{\omega}$ is c\`adl\`ag, we finally get that the difference
(\ref{eq:difference}) tends to $0$ as $n\ra+\ns.$

From this and (\ref{eq:one}), (\ref{eq:two}) (taking $\varepsilon$ as close to $0$ as we wish) we get the assertion.

\end{proof}
Assume that $\omega \in D\rbr{[0,T],  \R^d}$ possesses quadratic variation along the sequence of its Lebesgue partitions, by which we mean that the sequence of discrete (co)variations defined in (\ref{discrete_qv}) converges for $i,j=1,2,\ldots, d$ in the uniform topology to $[S^{i},S^{j}](\omega)$. Applying (multidimensional) F\"{o}llmer's pathwise It\^{o} formula to
$f\left(x_{1},\ldots,x_{d}\right)=x_{i}x_{j},$ $i,j\in\left\{ 1,2,\ldots,d\right\} ,$ for $t \in [0, T]$
we obtain the integration by parts formula:
\begin{eqnarray}
\omega^{i}\rbr{t}\omega^{j}\rbr{t}-\omega^{i}\rbr{0}\omega^{j}\rbr{0}& = &
(F)\int_{\left(0,t\right]}\omega^{i}\rbr{s-}\dd\omega^{j}\rbr{s}+(F)\int_{\left(0,t\right]}\omega^{j}\rbr{s-}\dd\omega^{i}\rbr{s} \nonumber \\
&& +[S^{i},S^{j}]_t(\omega),\label{eq:int_by_parts_typical}
\end{eqnarray}
where $(F)\int_{\left(0,t\right]}\ldots$
denotes F\"{o}llmer's pathwise integral (along the sequence of the Lebesgue partitions).
Notice also that for typical price path $\omega$,
$(F)\int_{\left(0,t\right]}\omega^{i}\rbr{s-}\dd\omega^{j}\rbr{s}$
coincides with the model-free It\^o integral $\int_{\left(0,t\right]}S_{s-}^{i}\left(\omega\right)\dd{S_{s}^{j}}\left(\omega\right)$ introduced in the previous section, since it is the limit of the sums of the form
\[
\sum_{k=1}^{\infty}S_{\pi_{k-1}^{n}}^{i}(\omega)S_{\pi_{k-1}^{n}\wedge t,\pi_{k}^{n}\wedge t}^{j}(\omega)
\]
and $\sum_{k=1}^{\infty}S_{\pi_{k-1}^{n}}^{i}(\omega)\mathbf{1}_{(\pi_{k-1}^{n},\pi_{k}^{n}]}(t)$
converges uniformly to $S_{t-}^{i}(\omega)$ for $t\in\left(0,T\right],$ thus, by
Theorem \ref{thm:integral} the distance $d_{\infty, \psi}^{\epsilon}$  between F\"{o}llmer's integral and the model-free It\^o integral is $0$, and this implies that for typical price paths they coincide.
Thus (\ref{eq:int_by_parts_typical}) may be also viewed as the integration
by parts formula for the model-free It\^o integral.

Next, we have the following result, which we will need in the sequel.

\begin{lem} \label{int_by_parts_fin_var} Let $\tilde{S}:\Omega\times\left[0,T\right]\ra\R^d$
be such that the process $\tilde{S}$ is adapted (to the filtration $\left({\mathcal F}_{t}\right)_{t\in\left[0,T\right]}$), and for any  $\omega \in \Omega,$ $\tilde{\omega} := \tilde{S}\rbr{\omega}$ is
finite variation, c\`{a}dl\`{a}g path on $\left[0,T\right].$  Then for typical
path $\omega\in\Omega$ and $i,j=1,2,\ldots,d$, $t \in [0, T]$, the following integration
by parts formula holds
\begin{align}
\omega^i\left(t\right)\tilde{\omega}^j\left(t\right)-\omega^i\left(0\right)\tilde{\omega}^j\left(0\right) & =\int_{\left(0,t\right]}\tilde{S}_{s-}^j\rbr{\omega}\dd{S_{s}^i}\left(\omega\right)+\int_{\left(0,t\right]}\omega^i\left(s-\right)\dd{\tilde{\omega}^j\left(s\right)} \nonumber \\ &\quad +\sum_{0<s\le t}\Delta\omega^i\left(s\right)\Delta\tilde{\omega}^j\left(s\right),\label{eq:int_by_parts_fin_typ}
\end{align}
where $\int_{\left(0,t\right]}\tilde{S}_{s-}^j\rbr{\omega}\dd{S_{s}^i}\left(\omega\right)$
denotes the model-free It\^o integral
and $\int_{\left(0,t\right]}\omega^i\left(s-\right)\dd{\tilde{\omega}^j\left(s\right)} $
denotes the Lebesgue-Stieltjes integral which coincides with the F\"{o}llmer integral.
\end{lem}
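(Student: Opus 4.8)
The plan is to fix a typical path $\omega\in\Omega$, write the left hand side of \eqref{eq:int_by_parts_fin_typ} as a telescoping sum along a carefully chosen sequence of partitions, and identify the three resulting discrete sums with the three terms on the right. For each $n\in\N$ I would work with a partition $\tau^{n}$ built from stopping times that simultaneously refines the $n$th Lebesgue partition $\pi^{n}(\omega)$ of $\omega$ and controls the oscillation of $\tilde{S}^{j}$. Concretely, with $\sigma^{n}_{0}:=0$ and $\sigma^{n}_{k}:=\inf\{t>\sigma^{n}_{k-1}:|\tilde{S}^{j}_{t}-\tilde{S}^{j}_{\sigma^{n}_{k-1}}|>1/n\}$ (these are stopping times for the right-continuous filtration and, since $\tilde{\omega}^{j}$ is c\`adl\`ag, finite in number on $[0,T]$), I would set $\tau^{n}:=\pi^{n}(\omega)\cup\{\sigma^{n}_{k}\}_{k}$, which for typical $\omega$ is a finite partition of $[0,T]$. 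Exact discrete integration by parts then gives, for every $t\in[0,T]$,
\begin{equation*}
\omega^{i}(t)\tilde{\omega}^{j}(t)-\omega^{i}(0)\tilde{\omega}^{j}(0)=A^{n}+B^{n}+C^{n},
\end{equation*}
where $A^{n}:=\sum_{k}\omega^{i}(\tau^{n}_{k-1}\wedge t)\Delta_{k}\tilde{\omega}^{j}$, $B^{n}:=\sum_{k}\tilde{\omega}^{j}(\tau^{n}_{k-1}\wedge t)\Delta_{k}\omega^{i}$, $C^{n}:=\sum_{k}\Delta_{k}\omega^{i}\,\Delta_{k}\tilde{\omega}^{j}$, and $\Delta_{k}f:=f(\tau^{n}_{k}\wedge t)-f(\tau^{n}_{k-1}\wedge t)$.

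I would then treat the three sums separately. Since $\tau^{n}\supseteq\pi^{n}(\omega)$, Lemma \ref{fin_var_lemma} applies to $A^{n}$ and gives $A^{n}\ra\int_{(0,t]}\omega^{i}(s-)\dd\tilde{\omega}^{j}(s)$, the Lebesgue--Stieltjes (equivalently F\"ollmer) integral. For $C^{n}$ I would write $C^{n}=\int_{(0,t]}g^{n}(s)\dd\tilde{\omega}^{j}(s)$ with $g^{n}(s):=\Delta_{k}\omega^{i}$ for $s\in(\tau^{n}_{k-1}\wedge t,\tau^{n}_{k}\wedge t]$. Because $\pi^{n}(\omega)\subseteq\tau^{n}$ isolates the jumps of $\omega^{i}$ for large $n$ while the oscillation of $\omega^{i}$ on the remaining intervals vanishes, one has $g^{n}(s)\ra\Delta\omega^{i}(s)$ pointwise and $|g^{n}|\le 2\Vert\omega^{i}\Vert_{\ns}$; dominated convergence with respect to the finite total variation measure $|\dd\tilde{\omega}^{j}|$ yields $C^{n}\ra\int_{(0,t]}\Delta\omega^{i}(s)\dd\tilde{\omega}^{j}(s)=\sum_{0<s\le t}\Delta\omega^{i}(s)\Delta\tilde{\omega}^{j}(s)$, the final equality holding because $s\mapsto\Delta\omega^{i}(s)$ is nonzero only on a countable set and so integrates exactly the atoms of $\dd\tilde{\omega}^{j}$.

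The decisive term is $B^{n}$, which I would realise as a model-free integral of a simple strategy, $B^{n}=(H^{n}\cdot S)_{t}$, where $H^{n}$ is the $\R^{d}$-valued strategy whose only nonzero coordinate is the $i$th one, equal to $\tilde{\omega}^{j}(\tau^{n}_{k-1})$ on $(\tau^{n}_{k-1},\tau^{n}_{k}]$. By construction of the $\sigma^{n}_{k}$, the oscillation of $\tilde{\omega}^{j}$ on every $\tau^{n}$-interval is at most $2/n$, uniformly in $\omega$, so $\sup_{\omega\in\Omega}\Vert H^{n}(\omega)-\tilde{S}^{j}_{-}(\omega)\Vert_{\ns}\le 2/n\ra 0$. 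Since $\tilde{S}^{j}_{-}$ is a c\`agl\`ad adapted process, the version of Theorem \ref{thm:integral} recorded in the remark following it shows that $\tilde{S}^{j}_{-}\in\overline{H}_{1}$ and that, along a subsequence $(n_{l})$, $(H^{n_{l}}\cdot S)$ converges uniformly, for typical $\omega$, to the model-free It\^o integral $\int_{(0,t]}\tilde{S}^{j}_{s-}(\omega)\dd S^{i}_{s}(\omega)$; hence $B^{n_{l}}$ converges to the first term on the right hand side of \eqref{eq:int_by_parts_fin_typ}.

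Finally I would combine the three limits: along $(n_{l})$ all of $A^{n_{l}},B^{n_{l}},C^{n_{l}}$ converge (the first and third for every $\omega$, the second for typical $\omega$), while the left hand side is independent of $n$, so passing to the limit in the telescoping identity yields \eqref{eq:int_by_parts_fin_typ} off a single null set. I expect the main obstacle to be the step concerning $B^{n}$: one must produce a \emph{single} partition $\tau^{n}$ of stopping times that refines $\pi^{n}(\omega)$ (as required for $A^{n}$ via Lemma \ref{fin_var_lemma} and for the isolation of jumps in $C^{n}$) and, at the same time, makes the integrand approximation $H^{n}$ converge to $\tilde{S}^{j}_{-}$ \emph{uniformly in} $\omega$, so that the continuity of the integral map of Theorem \ref{thm:integral} can be invoked. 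It is precisely the finiteness of the total variation of $\tilde{\omega}$ together with its c\`adl\`ag regularity that make such a $\tau^{n}$ available.
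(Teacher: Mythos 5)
Your overall strategy is workable and genuinely different from the paper's, but as written it contains a gap in the treatment of $C^{n}$. The assertion that $g^{n}(s)\ra\Delta\omega^{i}(s)$ \emph{pointwise} is false. Take $\omega^{i}$ constant except for a single jump of size $5$ at $u=1$, and $\tilde{\omega}^{j}$ constant on $[1/2,1)$ but with nontrivial variation elsewhere. Then for every $n$ there are no partition points of $\tau^{n}$ in $(1/2+o(1),1)$ (neither $\pi^{n}(\omega)$ nor the $\sigma^{n}_{k}$ place any there), so every $s\in(1/2,1)$ lies in the interval $(\tau^{n}_{k-1},1]$ with $\tau^{n}_{k-1}$ bounded away from $1$, and $g^{n}(s)=\omega^{i}(1)-\omega^{i}(\tau^{n}_{k-1})\ra 5\neq 0=\Delta\omega^{i}(s)$: the jump at the right endpoint is smeared over a non-shrinking interval. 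The limit $C^{n}\ra\sum_{0<s\le t}\Delta\omega^{i}(s)\Delta\tilde{\omega}^{j}(s)$ is nevertheless correct, but to justify it you must either (a) upgrade your claim to convergence $|\dd{\tilde{\omega}^{j}}|$-almost everywhere and prove that the failure set is $|\dd{\tilde{\omega}^{j}}|$-null --- which is true, but only because of the $\sigma^{n}_{k}$'s (absence of partition points on $[s,u)$ for infinitely many $n$ forces $\osc(\tilde{\omega}^{j};[s,u))\le 2/n$, hence $\tilde{\omega}^{j}$ constant there), whereas you attribute the convergence of $g^{n}$ solely to properties of $\pi^{n}(\omega)$; or (b) estimate the discrete covariation directly. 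A related minor point: the weights $h_{k}=\tilde{\omega}^{j}(\tau^{n}_{k-1})$ need not be bounded, so $H^{n}$ is not literally a simple strategy and the identification of $B^{n}$ with $(H^{n}\cdot S)_{t}$ requires a word of justification (truncation or localization on $\Omega_{q,M}$).

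For comparison, the paper takes route (b) and avoids the bespoke partition altogether: it applies the F\"ollmer integration by parts formula \eqref{eq:int_by_parts_typical} to the $2d$-dimensional path $(\omega,\tilde{\omega})$ along the Lebesgue partitions of the pair, identifies one F\"ollmer integral with the model-free It\^o integral and the other (via Lemma \ref{fin_var_lemma}) with the Lebesgue--Stieltjes integral, and is then left with precisely the convergence of the discrete covariation $\sum_{k}S^{i}_{\pi^{n}_{k-1}\wedge t,\pi^{n}_{k}\wedge t}(\omega)\,\tilde{S}^{j}_{\pi^{n}_{k-1}\wedge t,\pi^{n}_{k}\wedge t}(\omega)$ to the sum of products of jumps. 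This it proves by splitting off the indices with $\bigl|\Delta_{k}\tilde{\omega}^{j}\bigr|>\varepsilon$ and bounding the remainder by Cauchy--Schwarz by $\sqrt{\varepsilon}\,\bigl(\TTV{\tilde{\omega}^{j}}{[0,T]}{}\bigr)^{1/2}\bigl(\sum_{k}(\Delta_{k}\omega^{i})^{2}\bigr)^{1/2}$, the last factor being controlled by the existence of the quadratic variation of $\omega^{i}$. That estimate is exactly what your $C^{n}$ step needs, and it sidesteps the measure-theoretic subtleties of dominated convergence. Your handling of $A^{n}$ and $B^{n}$ (Lemma \ref{fin_var_lemma} for the Stieltjes term, the closedness remark after Theorem \ref{thm:integral} for the It\^o term via the uniform bound $\sup_{\omega}\Vert H^{n}(\omega)-\tilde{S}^{j}_{-}(\omega)\Vert_{\ns}\le 2/n$) is sound and essentially parallels the identifications made in the paper.
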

\begin{proof} We will use F\"{o}llmer's pathwise integration by parts formula (\ref{eq:int_by_parts_typical}).
Since (as we already noticed) the model-free It\^o integral $\int_{\left(0,t\right]}\tilde{S}_{s-}^j\rbr{\omega}\dd{S_{s}^i}\left(\omega\right)$ coincides for typical price paths with the F\"{o}llmer integral (along the sequence of the Lebesgue partitions $\rbr{\pi^n}_{n \in \N}$ of $\rbr{\omega, \tilde{\omega}} \in \R ^{2d}$) and since (by Lemma \ref{fin_var_lemma}) F\"{o}llmer's integral
$(F)\int_{\left(0,t\right]}\omega^i\left(s-\right)\dd{\tilde{\omega}^j\left(s\right)} $ (along the same sequence of partitions)
coincides with the classical
Lebesgue-Stieltjes integral $\int_{\left(0,t\right]}\omega^i\left(s-\right)\dd{\tilde{\omega}^j\left(s\right)} ,$ to obtain the thesis we need only to prove that for $i,j =1,2,\ldots, d$, $t \in [0, T]$,
$$\sum_{k=1}^{\infty}S_{\pi_{k-1}^{n}\wedge t,\pi_{k}^{n}\wedge t}^i(\omega){S}_{\pi_{k-1}^{n}\wedge t,\pi_{k}^{n}\wedge t}^j(\tilde{\omega}) = \sum_{k=1}^{\infty}S_{\pi_{k-1}^{n}\wedge t,\pi_{k}^{n}\wedge t}^i(\omega)\tilde{S}_{\pi_{k-1}^{n}\wedge t,\pi_{k}^{n}\wedge t}^j({\omega})$$
converges to $\sum_{0<s\le t}\Delta\omega^i\left(s\right)\Delta\tilde{\omega}^j\left(s\right)$ (uniformly in $t$).
The proof of this fact follows from the properties of Lebesgue partitions
and the Schwartz inequality. Let $\varepsilon>0$ be such that there is no jump of $\tilde{\omega}$ of size equal $\varepsilon$ and let $I^{\varepsilon,n},$
$n\in\N$, be the sequence of all indices $k\in\N$ for which $\left|S_{\pi_{k-1}^{n}\wedge t,\pi_{k}^{n}\wedge t}^j(\tilde{\omega})\right|>\varepsilon.$
We have
\begin{align*}
 & \left|\sum_{k=1}^{\infty}S_{\pi_{k-1}^{n}\wedge t,\pi_{k}^{n}\wedge t}^i(\omega){S}_{\pi_{k-1}^{n}\wedge t,\pi_{k}^{n}\wedge t}^j(\tilde{\omega})-\sum_{k\in I^{\varepsilon,n}}S_{\pi_{k-1}^{n}\wedge t,\pi_{k}^{n}\wedge t}^i(\omega){S}_{\pi_{k-1}^{n}\wedge t,\pi_{k}^{n}\wedge t}^j(\tilde{\omega})\right|\\
 & =\left|\sum_{k\in\N\setminus I^{\varepsilon,n}}S_{\pi_{k-1}^{n}\wedge t,\pi_{k}^{n}\wedge t}^i(\omega){S}_{\pi_{k-1}^{n}\wedge t,\pi_{k}^{n}\wedge T}^j(\tilde{\omega})\right|\\
 & \le \sqrt{\sum_{k\in\N\setminus I^{\varepsilon,n}}S_{\pi_{k-1}^{n}\wedge t,\pi_{k}^{n}\wedge t}^i(\omega)^{2}} \sqrt{\sum_{k\in\N\setminus I^{\varepsilon,n}}S_{\pi_{k-1}^{n}\wedge t,\pi_{k}^{n}\wedge t}^j(\tilde{\omega})^{2}}\\
 & \le\sqrt{\varepsilon}\sqrt{\sum_{k=1}^{\infty}\left|S_{\pi_{k-1}^{n}\wedge t,\pi_{k}^{n}\wedge t}^j(\tilde{\omega})\right|}\sqrt{\sum_{k=1}^{\infty}S_{\pi_{k-1}^{n}\wedge t,\pi_{k}^{n}\wedge t}^i(\omega)^{2}}.
\end{align*}
Notice that $\sum_{k=1}^{\infty}\left|S_{\pi_{k-1}^{n}\wedge t,\pi_{k}^{n}\wedge t}^j(\tilde{\omega})\right|$
is bounded by the total variation of $\tilde{\omega}$ while $$\rbr{\sum_{k=1}^{\infty}S_{\pi_{k-1}^{n}\wedge t,\pi_{k}^{n}\wedge t}^i(\omega)^{2}}_{t \in [0, T]}$$
converges to the quadratic variation of ${\omega^i}$ as $n\ra+\ns$ (the convergence still holds though partitions $\rbr{\pi^n}_{n \in \N}$ may be finer than the Lebesgue partitions of $\omega$, see the proofs of \cite[Corollary 3.11]{LochPerkPro:2016},  \cite[Theorem 2]{Vovk_cadlag:2015}).
Finally notice that
\[
\lim_{n\ra\ns}\sum_{k\in I^{\varepsilon,n}}S_{\pi_{k-1}^{n}\wedge t,\pi_{k}^{n}\wedge t}^i(\omega){S}_{\pi_{k-1}^{n}\wedge t,\pi_{k}^{n}\wedge t}^j(\tilde{\omega}) =\sum_{0<s\le t,\left|\Delta\tilde{\omega}\left(s\right)\right|>\varepsilon}\Delta\omega^i\left(s\right)\Delta\tilde{\omega}^j\left(s\right).
\]
(recall that there is no jump of $\tilde{\omega}$ of size equal $\varepsilon$). Since there is only finite number of jumps of $\tilde{\omega}$ of size greater than $\varepsilon$ and $\varepsilon>0$ may be as close to $0$ as we wish (since $\tilde{\omega}$ has only countable number of jumps), the result
follows.
\end{proof}

\begin{rem} It is straightforward to generalise (\ref{eq:int_by_parts_typical})
and (\ref{eq:int_by_parts_fin_typ}) to the case when the integrand
and the integrator are sums of a typical and a finite-variation path.
\end{rem}

\subsection{Quadratic variation along the partitions obtained from stopping times}

Let $\tau=\left\{ 0=\tau_{0}\le \tau_{1}\le\ldots\right\} $ be a (possibly
random) partition of the interval $\left[0,T\right],$ by which we
will mean that $\tau_i,$ $i=0,1,\ldots,$ is infinite, non-decreasing sequence of elements of $\left[0,T\right]\cup\left\{ +\ns\right\} $
such that from some $i\in\left\{ 1,2,\ldots\right\} $ on $\tau_{i}=\tau_{i+1}=\ldots=+\infty.$ To avoid redundancies, we may also  assume (as in \cite{Obloj_local:2015}, though this will not change the reasoning) that for all $i = 0,1,\ldots,$ $\tau_i < \tau_{i+1}$ whenever $\tau_i <+\infty.$
Similarly as Davis, Ob\l\'{o}j and Siorpaes in \cite{Obloj_local:2015},
for any $\omega\in\Omega$ we will denote
\[
O_{T}\left(\omega,\tau\right):=\sup\left\{ \left|\omega\left(t\right)-\omega\left(s\right)\right|\,:\, s,t\in\left[\tau_{i-1},\tau_{i}\right)\cap\left[0,T\right],i\in\left\{ 1,2,\ldots\right\} \right\} .
\]
Moreover, similarly to (\ref{discrete_qv}), for $\omega \in \Omega$, $i,j =1,2,\ldots,d$ and $t \in [0,T]$ let us define
\begin{equation} \label{discrete_qv1}
Q_{t}^{i,j,\tau}(\omega):=\sum_{k=1}^{\infty}S_{\tau_{k-1}\wedge t, \tau_{k}\wedge t}^{i}(\omega)S_{\tau_{k-1}\wedge t, \tau_{k}\wedge t}^{j}(\omega).
\end{equation}
\begin{definition}
We will say that the partition $\tau=\left\{ 0=\tau_{0}\le\tau_{1}\le\ldots\right\} $
is an \emph{optional} one (with respect to a given filtration) if all $\tau_{1}\le\tau_{2}\le\ldots$ are
stopping times (with respect to this filtration).
\end{definition}
Let us recall also the definition of the convergence in outer measure on a given set (Definition \ref{conv_outer_set}). Now we are ready to state
\begin{theorem} Let $\tau^{n},$ $n=1,2\ldots,$ be a sequence of optional
partitions (with respect to the filtration $\left({\mathcal F}_{t}\right)_{t\in\left[0,T\right]}$) of the interval $\left[0,T\right]$ such that for all $q,M>0$,
$O_{T}\left(\cdot,\tau^{n}\right)$ converges in outer measure to $0$
(as $n\ra0$) on $\Omega_{q,M}.$ Then for any $q,M>0$, $i,j=1,2,\ldots,d$, the sequence
$\rbr{Q^{i,j,\tau^n}}_{n\in \N}$ converges in outer
measure $\oP$ on $\Omega_{q,M}$ to $\sbr{S^i, S^j}$, i.e. the quadratic (co)variation obtained along the sequence of the Lebesgue partitions.
\end{theorem}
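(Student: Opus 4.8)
The plan is to reduce the statement about $Q^{i,j,\tau^n}$ along the general optional partitions to the already-established convergence of $Q^{i,j,n}$ along the Lebesgue partitions, by recognizing the discrete (co)variation as a difference of boundary terms minus two Riemann–Stieltjes-type integral sums, and then invoking the continuity of the model-free integral map (Theorem~\ref{thm:integral}, or more precisely the concentration estimate in Corollary~\ref{cor:c\`{a}dl\`{a}g concentration estimate}). Concretely, the elementary algebraic identity (``summation by parts'')
\begin{equation*}
S^i_{\tau_{k-1}\wedge t,\tau_k\wedge t}\,S^j_{\tau_{k-1}\wedge t,\tau_k\wedge t} = S^i_{\tau_{k-1}\wedge t,\tau_k\wedge t}\,\rbr{S^j_{\tau_k\wedge t}-S^j_{\tau_{k-1}\wedge t}}
\end{equation*}
lets one write, upon summing over $k$ and rearranging,
\begin{equation*}
Q^{i,j,\tau^n}_t(\omega) = \omega^i(t)\omega^j(t)-\omega^i(0)\omega^j(0) - \sum_{k=1}^\infty \omega^i(\tau^n_{k-1}\wedge t)\,S^j_{\tau^n_{k-1}\wedge t,\tau^n_k\wedge t} - \sum_{k=1}^\infty \omega^j(\tau^n_{k-1}\wedge t)\,S^i_{\tau^n_{k-1}\wedge t,\tau^n_k\wedge t}.
\end{equation*}
The two sums are exactly the integral processes $(F^n\cdot S)_t$ and $(G^n\cdot S)_t$ for the simple (left-continuous, adapted) integrands $F^n_s:=\sum_k \omega^i(\tau^n_{k-1})\,\mathbf{1}_{(\tau^n_{k-1},\tau^n_k]}(s)$ and its $i\leftrightarrow j$ analogue; these are admissible up to the bound $M$ on $\|S\|_\infty$, so they live in $\overline{H}_1$.

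The first step is therefore to observe that the \emph{same} boundary-term identity holds with $\tau^n$ replaced by the Lebesgue partition $\pi^n$, giving $Q^{i,j,n}_t=\omega^i(t)\omega^j(t)-\omega^i(0)\omega^j(0)-(\tilde F^n\cdot S)_t-(\tilde G^n\cdot S)_t$ with $\tilde F^n_s:=\sum_k\omega^i(\pi^n_{k-1})\mathbf{1}_{(\pi^n_{k-1},\pi^n_k]}(s)$. Since the boundary terms are identical and $Q^{i,j,n}\to[S^i,S^j]$ uniformly (for typical $\omega$, hence $\oP$-quasi-everywhere), subtracting shows that the difference $Q^{i,j,\tau^n}-Q^{i,j,n}$ equals $-\big((F^n-\tilde F^n)\cdot S\big)-\big((G^n-\tilde G^n)\cdot S\big)$. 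Thus it suffices to prove that these two integral differences converge to $0$ in outer measure on $\Omega_{q,M}$. To control them I would apply Corollary~\ref{cor:c\`{a}dl\`{a}g concentration estimate} with $H:=F^n-\tilde F^n$: the only remaining input is a uniform-in-$s$ supremum bound $\|F^n-\tilde F^n\|_\infty = \sup_s|\omega^i(\tau^n_{k(s)-1})-\omega^i(\pi^n_{\ell(s)-1})|$, where $k(s),\ell(s)$ are the indices of the intervals containing $s$ in the two partitions.

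The crux is to show that this sup tends to $0$ in outer measure on $\Omega_{q,M}$, and this is exactly where the hypothesis $O_T(\cdot,\tau^n)\to 0$ in outer measure enters. The reasoning is that both evaluation points $\tau^n_{k(s)-1}$ and $\pi^n_{\ell(s)-1}$ are the left endpoints of partition cells containing (or just to the left of) $s$; once the \emph{oscillation} of $\omega$ over each $\tau^n$-cell is below $\delta$ and, simultaneously, the mesh-oscillation over each Lebesgue cell is below $\delta$ (which holds for typical $\omega$, since $O_T(\omega,\pi^n)\le 2^{-n}\to0$ by construction of the Lebesgue partition), the value $\omega^i$ at either left endpoint is within $O(\delta)$ of $\omega^i(s-)$, hence within $O(\delta)$ of each other. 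Making this comparison precise requires a little care because the two partitions are genuinely different, so the cell of $\tau^n$ containing $s$ and the cell of $\pi^n$ containing $s$ have different left endpoints; the clean way is to bound $|\omega^i(\tau^n_{k(s)-1})-\omega^i(s-)|\le O_T(\omega,\tau^n)$ and $|\omega^i(\pi^n_{\ell(s)-1})-\omega^i(s-)|\le O_T(\omega,\pi^n)$ separately and add. Feeding $\|F^n-\tilde F^n\|_\infty\to0$ (in outer measure on $\Omega_{q,M}$) into the concentration estimate, whose right-hand side is linear in $m=\|H\|_\infty$ and in which the factors $(1+3dM+2d\psi(M))$, $6\sqrt q+2+2M$ are bounded on $\Omega_{q,M}$, yields $\oP(\{\|(H\cdot S)\|_\infty\ge a\}\cap\Omega_{q,M})\to0$ for every $a>0$, which is precisely convergence in outer measure on $\Omega_{q,M}$. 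I expect the main obstacle to be making the oscillation comparison between the two distinct partitions fully rigorous — in particular handling the typical-path set on which $O_T(\omega,\pi^n)\to0$ and intersecting it correctly with $\Omega_{q,M}$ — together with verifying the measurability/admissibility needed to place $F^n-\tilde F^n$ in $\overline{H}_1$ so that the corollary applies.
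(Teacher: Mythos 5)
Your proposal is correct and follows essentially the same route as the paper: both rest on the summation/integration-by-parts identity that turns the discrepancy of the discrete (co)variations into a model-free integral whose integrand is bounded by the cell oscillations, followed by Corollary~\ref{cor:c\`{a}dl\`{a}g concentration estimate} after splitting on the event $\left\{ O_{T}\left(\cdot,\tau^{n}\right)>\delta\right\}$. The only (harmless) deviations are that you compare against the Lebesgue-partition Riemann sums rather than directly against the limit $\sbr{S^i,S^j}$ via (\ref{eq:int_by_parts_typical}) --- which costs you the extra (true) observation that $O_{T}\left(\cdot,\pi^{n}\right)\lesssim 2^{-n}$ --- and that you treat $i\neq j$ by the symmetric two-integral decomposition instead of the paper's polarization.
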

\begin{proof} First we will prove the thesis for $i=j\in \cbr{1,2,\ldots,d}$. Let $\tau^{n}=\left\{ 0=\tau_{0}^{n}\le\tau_{1}^{n}\le\ldots\right\} $
and take
\[
F^{n}_t\left(\omega\right):=\sum_{i=1}^{+\ns}S^i_{\tau_{i-1}^{n}}\rbr{\omega}\mathbf{1}_{\left[\tau_{i-1}^{n},\tau_{i}^{n}\right)}\left(t\right).
\]
By the definition of $O_{T}\left(\omega,\tau^{n}\right)$ we have
\[
\left\Vert F^{n}(\omega)-S^i(\omega)\right\Vert _{\ns}\le O_{T}\left(\omega,\tau^{n}\right).
\]
Now, using the integration by parts formula (\ref{eq:int_by_parts_typical}) for any $t\in\left[0,T\right]$ we calculate
\begin{equation}
Q^{i,i,\tau^n}_t(\omega)-\left[S^i, S^i\right]_{t}(\omega)=2\int_{\left(0,t\right]}\left(S_{s-}^i(\omega)-F_{s-}^{n}(\omega)\right)\mathrm{d}S_s^i(\omega),\label{eq:q_var_int}
\end{equation}
where $\int_{\left(0,t\right]}\left(S_{s-}^i(\omega)-F_{s-}^{n}(\omega)\right)\mathrm{d}S_s^i(\omega)$ denotes the model-free It\^o integral. (Note that this integral is well defined since the partition $\tau^{n}$ is optional one.) Now,
fixing $\varepsilon,\delta>0,$ and applying Corollary \ref{cor:c\`{a}dl\`{a}g concentration estimate} we get for some constant $C_{q,M}$
depending on $q$ and $M$ (and $\psi$) only
\begin{align*}
 & \oP\left(\omega\in\Omega_{q,M}\,:\,\left\Vert Q^{i,i,\tau^n}_t(\omega)-\left[S^i, S^i\right]_{t}(\omega)\right\Vert _{\ns}>\varepsilon\right)\\
 & =\oP\left(\omega\in\Omega_{q,M}\,:\,\left\Vert \rbr{\int_{\left(0,t\right]}\left(S_{s-}^i(\omega)-F_{s-}^{n}(\omega)\right)\mathrm{d}S_s^i(\omega)}_{t \in [0, T]}\right\Vert _{\ns}>\varepsilon/2\right)\\
 & \le\oP\left(\omega\in\Omega_{q,M}\,:\, O_{T}\left(\omega,\tau^{n}\right)>\delta\right)\\
 & \quad+\oP\left(\omega\in\Omega_{q,M}\,:\, O_{T}\left(\omega,\tau^{n}\right)\le\delta,\left\Vert \rbr{\int_{\left(0,t\right]}\left(S_{s-}^i(\omega)-F_{s-}^{n}(\omega)\right)\mathrm{d}S_s^i(\omega)}_{t \in [0, T]}\right\Vert _{\ns}>\varepsilon/2\right)\\
 & \le\oP\left(\omega\in\Omega_{q,M}\,:\, O_{T}\left(\omega,\tau^{n}\right)>\delta\right)+C_{q,M}\frac{\delta}{\varepsilon}.
\end{align*}
Since $\oP\left(\omega\in\Omega_{q,M}:O_{T}\left(\omega,\tau^{n}\right)>\delta\right)\ra0$
as $n\ra+\ns$ and $\delta/\varepsilon$ may be chosen arbitrary close
to $0,$ we get the convergence result.

To obtain the convergence for  $i,j\in \cbr{1,2,\ldots,d}$, $i \neq j$, we will use polarization. More precisely, we define
\[
H^{n}_t\left(\omega\right):=\sum_{i=1}^{+\ns}\rbr{S^i_{\tau_{i-1}^{n}}\rbr{\omega} + S^j_{\tau_{i-1}^{n}}\rbr{\omega}}\mathbf{1}_{\left[\tau_{i-1}^{n},\tau_{i}^{n}\right)}\left(t\right),
\]
and using integration by parts we obtain that the process
\[
R^{i,j,\tau^n}_t (\omega ) :=\sum_{k=1}^{\infty}\rbr{S_{\tau_{k-1}^{n}\wedge t, \tau_{k}^{n}\wedge t}^{i}(\omega)+S_{\tau_{k-1}^{n}\wedge t, \tau_{k}^{n}\wedge t}^{j}(\omega)}^2
\]
converges in outer
measure $\oP$ on $\Omega_{q,M}$ to $\sbr{S^i +S^j, S^i +S^j} = \sbr{S^i, S^i} + 2 \sbr{S^i, S^j} + \sbr{S^j,S^j}$ since
\[
R^{i,j,\tau^n}_t(\omega)-\left[S^i+ S^j, S^i + S^j\right]_{t}(\omega)=2\int_{\left(0,t\right]}\left(S_{s-}^i(\omega) + S_{s-}^j(\omega)-H_{s-}^{n}(\omega) \right)\mathrm{d}\rbr{S_s^i(\omega)+S_s^j(\omega)}.
\]
Similarly, one proves that the process  $$T^{i,j,\tau^n}_t(\omega) = \sum_{k=1}^{\infty}\rbr{S_{\tau_{k-1}^{n}\wedge t, \tau_{k}^{n}\wedge t}^{i}(\omega)-S_{\tau_{k-1}^{n}\wedge t, \tau_{k}^{n}\wedge t}^{j}(\omega)}^2$$
converges in outer
measure $\oP$ on $\Omega_{q,M}$ to $\sbr{S^i -S^j, S^i -S^j} = \sbr{S^i, S^i} - 2 \sbr{S^i, S^j} + \sbr{S^j,S^j}$. Thus we obtain that the difference of the processes $R^{i,j,\tau^n}$ and $T^{i,j,\tau^n}$ which is equal $4Q^{i,j,\tau^n}$ converges in outer
measure $\oP$ on $\Omega_{q,M}$ to $4\sbr{S^i ,S^j} $.
\end{proof}

When we assume some stronger mode of convergence of $O_{T}\left(\cdot,\tau^{n}\right)$
then, naturally, we may expect to obtain a stronger mode of convergence
of $Q^{i,j,\tau^n}.$ Let us recall the definition
of distances $d_{\infty}$ and $d_{\infty,\psi}^{\epsilon}$ (defined in (\ref{dinf_def}) and (\ref{eq:distance c\`{a}dl\`{a}g})). We have the following result.
\begin{theorem} Let $\tau^{n},$ $n=1,2\ldots,$ be a sequence of optional
partitions of the interval $\left[0,T\right]$, $i\in \cbr{1,2,\ldots,d}$ and
\[
F^{n}_t\left(\omega\right):=\sum_{i=1}^{+\ns}S^i_{\tau_{i-1}^{n}}\rbr{\omega}\mathbf{1}_{\left[\tau_{i-1}^{n},\tau_{i}^{n}\right)}\left(t\right).
\]
Assume that  $\lim_{n\ra+\ns}d_{\ns}\left(F^{n},S^i\right)=0$ then for  any $\epsilon \in (0,1),$
\begin{equation} \label{qvar_convii}
d_{\infty,\psi}^{\epsilon}\rbr{Q^{i,i,\tau^n},\sbr{S^i, S^i}}\ra0\quad\mbox{as }n\ra0.
\end{equation}

Similarly, for $i,j \in \cbr{1,2,\ldots,d}$ define
\[
H^{n}_t\left(\omega\right):=\sum_{i=1}^{+\ns}\rbr{S^i_{\tau_{i-1}^{n}}\rbr{\omega} + S^j_{\tau_{i-1}^{n}}\rbr{\omega}}\mathbf{1}_{\left[\tau_{i-1}^{n},\tau_{i}^{n}\right)}\left(t\right)
\]
and
\[
J^{n}_t\left(\omega\right):=\sum_{i=1}^{+\ns}\rbr{S^i_{\tau_{i-1}^{n}}\rbr{\omega} - S^j_{\tau_{i-1}^{n}}\rbr{\omega}}\mathbf{1}_{\left[\tau_{i-1}^{n},\tau_{i}^{n}\right)}\left(t\right).
\]
Assume that  $\lim_{n\ra+\ns}d_{\ns}\left(H^{n},S^i + S^j\right)=0$ and $\lim_{n\ra+\ns}d_{\ns}\left(J^n,S^i - S^j\right)=0$ then for any $\epsilon \in (0,1),$
\begin{equation} \label{qvar_convij}
d_{\infty,\psi}^{\epsilon}\rbr{Q^{i,j,\tau^n},\sbr{S^i, S^j}}\ra0\quad\mbox{as }n\ra0.
\end{equation}

\end{theorem}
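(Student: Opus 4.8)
The whole argument will ride on the continuity estimate~\eqref{eq:continuity c\`{a}dl\`{a}g} of Theorem~\ref{thm:integral}, together with two structural observations: the distance $d_{\infty,\psi}^{\epsilon}$ from~\eqref{eq:distance c\`{a}dl\`{a}g} depends on its arguments only through their difference (each $d_{\infty,2^n,2^m}$ does, being $\overline{E}[\Vert X-Y\Vert_{\ns}\wedge\mathbf{1}_{\Omega_{2^n,2^m}}]$), and the integral map $I\colon H\mapsto(H\cdot S)$ is linear with $I(0)=0$. For the diagonal case $i=j$, identity~\eqref{eq:q_var_int} already established in the previous proof gives $Q^{i,i,\tau^n}-[S^i,S^i]=2\,(G^n\cdot S)$, where $G^n$ is the integrand whose $i$-th coordinate equals $S^i_{\cdot-}-F^n_{\cdot-}$ and whose remaining coordinates vanish; being c\`agl\`ad and adapted, it is an admissible integrand in $\overline{H}_1$. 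Hence, by difference-invariance, linearity and~\eqref{eq:continuity c\`{a}dl\`{a}g},
\[
d_{\infty,\psi}^{\epsilon}\rbr{Q^{i,i,\tau^n},[S^i,S^i]}=d_{\infty,\psi}^{\epsilon}\rbr{I(2G^n),I(0)}\lesssim d_{\infty}(2G^n,0)^{1/3}.
\]

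It then suffices to push $d_{\infty}(2G^n,0)$ to $0$ using the hypothesis, and the only point requiring care here is passing from the left-limit integrand back to $F^n$ itself. For the c\`adl\`ag path $S^i-F^n$ one has $\sup_t|(S^i-F^n)_{t-}|\le\sup_t|(S^i-F^n)_t|$, so $\Vert G^n\Vert_{\ns}\le\Vert S^i-F^n\Vert_{\ns}$; combined with $\Vert 2G^n\Vert_{\ns}\le2\Vert G^n\Vert_{\ns}$ and the definition~\eqref{dinf_def} of $d_\infty$, this gives $d_{\infty}(2G^n,0)\le 2\,d_{\infty}(F^n,S^i)\ra0$. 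Together with the previous display this proves~\eqref{qvar_convii}.

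For $i\neq j$ I would repeat this with the polarizing processes $R^{i,j,\tau^n}$ and $T^{i,j,\tau^n}$ introduced in the previous proof. The identities recorded there express $R^{i,j,\tau^n}-[S^i+S^j,S^i+S^j]$ and $T^{i,j,\tau^n}-[S^i-S^j,S^i-S^j]$ as twice the model-free integrals of the integrands $S^i_{\cdot-}+S^j_{\cdot-}-H^n_{\cdot-}$ against $S^i+S^j$ and $S^i_{\cdot-}-S^j_{\cdot-}-J^n_{\cdot-}$ against $S^i-S^j$; since now two coordinates of the integrand are of equal modulus, one picks up only a harmless factor $\sqrt2$, and exactly as in the diagonal case the hypotheses $d_{\infty}(H^n,S^i+S^j)\ra0$ and $d_{\infty}(J^n,S^i-S^j)\ra0$ force $d_{\infty,\psi}^{\epsilon}(R^{i,j,\tau^n},[S^i+S^j,S^i+S^j])\ra0$ and $d_{\infty,\psi}^{\epsilon}(T^{i,j,\tau^n},[S^i-S^j,S^i-S^j])\ra0$. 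Since $4Q^{i,j,\tau^n}=R^{i,j,\tau^n}-T^{i,j,\tau^n}$ and $4[S^i,S^j]=[S^i+S^j,S^i+S^j]-[S^i-S^j,S^i-S^j]$, the final step is to recombine these two limits into~\eqref{qvar_convij}, and this is where the only genuine care is needed. Writing $A:=R^{i,j,\tau^n}-[S^i+S^j,S^i+S^j]$ and $B:=T^{i,j,\tau^n}-[S^i-S^j,S^i-S^j]$, one has $Q^{i,j,\tau^n}-[S^i,S^j]=\tfrac14(A-B)$, and the monotonicity and subadditivity of the sublinear expectation $\overline{E}$ together with the elementary inequality $(a+b)\wedge c\le(a\wedge c)+(b\wedge c)$ bound each $d_{\infty,2^n,2^m}(Q^{i,j,\tau^n},[S^i,S^j])$ by $d_{\infty,2^n,2^m}(A,0)+d_{\infty,2^n,2^m}(B,0)$; summing against the weights in~\eqref{eq:distance c\`{a}dl\`{a}g} yields $d_{\infty,\psi}^{\epsilon}(Q^{i,j,\tau^n},[S^i,S^j])\le d_{\infty,\psi}^{\epsilon}(R^{i,j,\tau^n},[S^i+S^j,S^i+S^j])+d_{\infty,\psi}^{\epsilon}(T^{i,j,\tau^n},[S^i-S^j,S^i-S^j])\ra0$, which is the assertion.
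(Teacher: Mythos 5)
Your proof is correct and follows essentially the same route as the paper's (which is only a two-line sketch citing \eqref{eq:q_var_int}, polarization, and the continuity estimate \eqref{eq:continuity c\`{a}dl\`{a}g} of Theorem \ref{thm:integral} applied to the integrands with $i$th component $F^n_{t-}$ resp.\ $S^i_{t-}$); you simply spell out the translation-invariance of the metrics, the passage from $F^n$ to its left-limit version, and the recombination step $4Q^{i,j,\tau^n}=R^{i,j,\tau^n}-T^{i,j,\tau^n}$, all of which are the intended details.
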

\begin{proof} Convergence (\ref{qvar_convii}) follows immediately from (\ref{eq:q_var_int})
and Theorem \ref{thm:integral} applied to $F_t=\rbr{0,0, \ldots, F_{t-}^n, \ldots, 0}$ ($i$th component of $F_t$ is equal $F_{t-}^n$ and all other components are equal $0$), and $G_t = \rbr{0,0, \ldots, S_{t-}^i, \ldots, 0}$ ($i$th component of $G_t$ is equal $S_{t-}^i$ and all other components are equal $0$). Convergence (\ref{qvar_convij}) follows in a similar way from polarization, integration by parts and Theorem \ref{thm:integral}.
\end{proof}

\subsection{Quadratic variation expressed in terms of the truncated variation}

The theorem which we will prove in this subsection provides one more formula
for the quadratic variation of model-free c\`{a}dl\`{a}g price paths. This will be a model-free counterpart of the formula obtained for c\`adl\`ag semimartingles in \cite{LochQVTV:2017}. More
precisely, we will obtain partition-independent formula for the continuous
part of the quadratic (co)variation along the sequence
of Lebesgue partitions, which we will denote by $\left\langle S^i, S^j \right\rangle.$
It is formally defined for $i,j \in \cbr{1,2,\ldots, d}$ as
\[
\left\langle S^i, S^j  \right\rangle _{t}(\omega)=\left[S^i, S^j \right]_{t}(\omega)-\sum_{0<s\le t} \Delta\omega^i(s) \Delta \omega^j (s).
\]

To state our result we need to introduce the notion of the truncated variation (with the truncation parameter $c\ge0$) of
a c\`{a}dl\`{a}g path $f:\left[0,T\right]\ra\R$ which is for $t \in [0,T]$ defined as
\[
\TTV{f}{\left[0,t\right]}c:=\sup_{n}\sup_{0\le t_{0}<t_{1}<\ldots<t_{n}\le t}\sum_{i=1}^{n}\max\left\{ \left|f\left(t_{i}\right)-f\left(t_{i-1}\right)\right|-c,0\right\} .
\]
Notice that $\TTV{f}{\left[0,t\right]}c$ does not depend on any partition, since it is the supremum over \emph{all} partitions of the interval $[0, t].$

\begin{theorem}\label{q_var_TV_representation} Let us fix $q,M>0$. For $i,j \in \cbr{1,2,\ldots, d}$ the following convergences hold:
\begin{equation} \label{pol1}
c\cdot \TTVemph{S^i(\omega)}{\left[0,\cdot \right]}c \ra_{c\ra0+}^{\oP} \left\langle S^i, S^i \right\rangle (\omega),
\end{equation}
and
\begin{equation} \label{pol2}
c\rbr{\TTVemph{S^i(\omega)+S^j(\omega)}{\left[0,\cdot \right]}c - \TTVemph{S^i(\omega)-S^j(\omega)}{\left[0,\cdot \right]}c} \ra_{c\ra0+}^{\oP} 4 \left\langle S^i, S^j \right\rangle  (\omega),
\end{equation}
where $\left\langle S^i, S^j \right\rangle $ denotes the
continuous part the quadratic (co)variation (defined along the sequence
of the Lebesgue partitions) and $\ra_{c\ra0+}^{\oP}$ denotes the convergence
in the outer measure $\oP$ as $c\ra0+$ on the set $\Omega_{q,M}.$
\end{theorem}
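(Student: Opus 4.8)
The plan is to establish the diagonal convergence \eqref{pol1} first and then deduce \eqref{pol2} by polarization. Fix $i$ and write $f:=S^i(\omega)$. The starting point is the deterministic representation of the truncated variation underlying \cite{LochQVTV:2017}: for each $c>0$ there is a finite variation, c\`adl\`ag, adapted path $f^c$ with $\left\Vert f-f^c\right\Vert_\ns\le c/2$ and $\TTV{f}{[0,T]}c=\text{TV}(f^c,[0,T])$, namely the solution of the double Skorokhod reflection of $f$ in the tube of width $c$ (its existence and adaptedness for c\`adl\`ag paths are part of truncated variation theory). Setting $r:=f-f^c$, so that $\left\Vert r\right\Vert_\ns\le c/2$, the defining property of the reflection is that the continuous part $(f^c)^{\cont}$ increases only on $\cbr{r=c/2}$ and decreases only on $\cbr{r=-c/2}$; hence the Lebesgue--Stieltjes integral satisfies $\int_{(0,T]}r(s-)\,\dd{(f^c)^{\cont}(s)}=\frac c2\,\text{TV}\rbr{(f^c)^{\cont}}$, while at the (large) jumps one has $r(s-)\Delta f^c(s)=-\tfrac c2\left|\Delta f^c(s)\right|$.

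Next I would apply integration by parts to $r^2=(S^i)^2-2S^if^c+(f^c)^2$, combining the model-free formula \eqref{eq:int_by_parts_typical} for $(S^i)^2$, Lemma~\ref{int_by_parts_fin_var} for the mixed term $S^if^c$ (with $\tilde S=f^c$ of finite variation), and ordinary calculus for $(f^c)^2$. Using the reflection identity above, this produces, for every $t\in[0,T]$, the master identity
\[
c\,\TTV{f}{[0,t]}c=\left\langle S^i,S^i\right\rangle_t(\omega)+J_c(t)+2\int_{(0,t]}r(s-)\,\dd{S^i_s}(\omega)-\rbr{r(t)^2-r(0)^2},
\]
where $\int_{(0,t]}r(s-)\,\dd{S^i_s}$ is the model-free It\^o integral and the nonnegative remainder $J_c(t)$ collects the jump contributions $c^2\#\cbr{\left|\Delta f\right|>c}+\sum_{\left|\Delta f(s)\right|\le c}(\Delta f(s))^2+2c\sum_{\left|\Delta f(s)\right|>c}\rbr{\left|\Delta f(s)\right|-c}$. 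The point of this identity is that the target $\left\langle S^i,S^i\right\rangle$ appears exactly, so only the three error terms must be shown to vanish in outer measure on $\Omega_{q,M}$ as $c\ra0+$.

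The boundary term is deterministically bounded by $c^2/4$. The It\^o term is where the model-free machinery does the real work: since $r$ is adapted and c\`agl\`ad-izable with $\left\Vert r\right\Vert_\ns\le c/2$, its left-continuous version lies in $\overline{H}_1$, and Corollary~\ref{cor:c\`{a}dl\`{a}g concentration estimate} yields $\oP\rbr{\cbr{\left\Vert 2\int_{(0,\cdot]}r(s-)\,\dd{S^i_s}\right\Vert_\ns\ge a}\cap\Omega_{q,M}}\le C_{q,M}\,c/a\ra0$, where $C_{q,M}$ depends only on $q,M$ (and $\psi$). For the jump remainder, the elementary tail estimate $J_c(T)\le 4\sum_{\left|\Delta f(s)\right|\le\delta}(\Delta f(s))^2+2cq/\delta+c^2q/\delta^2$, valid on $\Omega_{q,M}$ where $\sum_s(\Delta f(s))^2\le|[S]_T|\le q$, reduces matters to controlling $\oP\rbr{\cbr{\sum_{\left|\Delta f(s)\right|\le\delta}(\Delta f(s))^2>\eta}\cap\Omega_{q,M}}$ as $\delta\ra0$.

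I expect this last point to be the \emph{main obstacle}. Although $\sum_{\left|\Delta f(s)\right|\le\delta}(\Delta f(s))^2\ra0$ as $\delta\ra0$ for \emph{every} $\omega\in\Omega_{q,M}$ (the total jump quadratic variation being at most $q$), the outer measure $\oP$ need not be continuous from above, so this pathwise statement must be upgraded to genuine convergence in $\oP$ on $\Omega_{q,M}$. I plan to obtain it from the uniform bound $\sum_s(\Delta f(s))^2\le q$ together with the approximation of $[S^i,S^i]$ by its discrete quadratic variations along the Lebesgue partitions (as in \cite[Corollary 3.11]{LochPerkPro:2016}): for large $n$ the jumps of size $>\delta$ are realized by single increments of $\pi^n$, which makes the small-jump part of the quadratic variation accessible through discrete quantities that are uniformly bounded on $\Omega_{q,M}$, and a Cauchy-type estimate for these approximations gives the required outer-measure smallness. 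Finally, \eqref{pol2} follows by applying the diagonal identity to the real-valued combinations $S^i+S^j$ and $S^i-S^j$ and subtracting, using $4\left\langle S^i,S^j\right\rangle=\left\langle S^i+S^j,S^i+S^j\right\rangle-\left\langle S^i-S^j,S^i-S^j\right\rangle$; the relevant It\^o integrals split as $\int r(s-)\,\dd{S^i}\pm\int r(s-)\,\dd{S^j}$, so each is again controlled by Corollary~\ref{cor:c\`{a}dl\`{a}g concentration estimate}, and in particular the downward-jump restriction is never needed for $S^i-S^j$ itself.
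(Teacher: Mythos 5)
Your proposal is correct in outline and follows essentially the same route as the paper: the finite--variation tube approximation $\omega^{c}$ from \cite{Lochowski_stoch_integral:2014}, the identity $c\cdot\TTV{\rbr{\omega^{c}}^i}{[0,t]}{}=\int_{(0,t]}(\omega^i-\rbr{\omega^{c}}^i)\,\dd{\rbr{\omega^{c}}^i}$, integration by parts via Lemma \ref{int_by_parts_fin_var} so that the whole error reduces to a model-free It\^o integral of the $O(c)$-small remainder controlled by Corollary \ref{cor:c\`{a}dl\`{a}g concentration estimate}, plus polarization for \eqref{pol2}. The one point where you diverge --- the jump remainder, which you single out as the main obstacle and propose to attack through a Cauchy-type estimate along Lebesgue partitions --- is dispatched in the paper by plain dominated convergence pathwise for typical $\omega$, using $|\Delta\rbr{\omega^{c}}^i(t)|\le|\Delta\omega^i(t)|$ and $\sum_{0<s\le t}(\Delta\omega^i(s))^2\le[S^i,S^i]_t(\omega)$, so that extra machinery is not needed (though your remark that $\oP$ is not continuous from above is a legitimate subtlety which the paper's passage from typical-path convergence to convergence in $\oP$ on $\Omega_{q,M}$ also leaves implicit).
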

\begin{proof} Using construction in \cite[Sect. 2]{Lochowski_stoch_integral:2014}
we know that for any $\omega \in D\rbr{[0, T], \R^d}$ and
$c>0$ there exists $\omega^{c}\in D\rbr{[0, T], \R^d}$
such that
\begin{enumerate}
\item $\omega^{c}$ has finite total variation;
\item $\omega^{c}\left(0\right)=\omega\left(0\right);$
\item for every $t\in\left[0,T\right],$ $i \in \cbr{1,2,\ldots, d}$, $\left|\omega^i\left(t\right)-\rbr{\omega^{c}}^i\left(t\right)\right|\le c;$
\item for every $t\in\left(0,T\right],$ $i \in \cbr{1,2,\ldots, d}$, $\left|\Delta\rbr{\omega^{c}}^i\left(t\right)\right|\le\left|\Delta\omega^i\left(t\right)\right|,$
\item the process $S^c(\omega) := \omega^c$ is adapted to the filtration $\left({\mathcal F}_{t}\right)_{t\in\left[0,T\right]}.$
\end{enumerate}
Moreover (see \cite[Lemma 5.1]{Lochowski_stoch_integral:2014} for $i \in \cbr{1,2,\ldots, d}$, $t \in [0, T]$ we have
\begin{equation}
\TTV{\omega^i}{\left[0,t\right]}{2c}\le\TTV{\rbr{\omega^{c}}^i}{\left[0,t\right]}{}\le\TTV{\omega^i}{\left[0,t\right]}{2c}+2c\label{eq:zero}
\end{equation}
and
\begin{equation}
c\cdot\TTV{\rbr{\omega^{c}}^i}{\left[0,t\right]}{}=\int_{\left(0,t\right]}\left(\omega^i-\rbr{\omega^{c}}^i\right)\dd{\rbr{\omega^{c}}^i},\label{eq:first}
\end{equation}
where $\int_{\left(0,t\right]}\left(\omega^i-\rbr{\omega^{c}}^i\right)\dd{\rbr{\omega^{c}}^i}$
denotes the standard Lebesgue-Stieltjes integral and $\TTV{\rbr{\omega^{c}}^i}{\left[0,t\right]}{}$
denotes the total variation of $\rbr{\omega^{c}}^i$ (recall that $\omega^{c}$
has finite total variation ).

Similarly as in \cite[proof of Theorem 1]{LochQVTV:2017} we calculate
\begin{eqnarray}
\int_{\left(0,t\right]}\left(\omega^i-\rbr{\omega^{c}}^i\right)\dd{\rbr{\omega^{c}}^i} & = & \int_{\left(0,t\right]}\left(\omega^i\left(s-\right)-\rbr{\omega^{c}}^i\left(s-\right)+\Delta\left(\omega^i\left(s\right)-\rbr{\omega^{c}}^i\left(s\right)\right)\right)\dd{\rbr{\omega^{c}}^i\left(s\right)}\nonumber \\
 & = & \int_{\left(0,t\right]}\omega^i\left(s-\right)\dd{\rbr{\omega^{c}}^i\left(s\right)}-\int_{\left(0,t\right]}\rbr{\omega^{c}}^i\left(s-\right)\dd{\rbr{\omega^{c}}^i\left(s\right)}\nonumber \\
 &  & +\sum_{0<s\le t}\Delta\left(\omega^i\left(s\right)-\rbr{\omega^{c}}^i\left(s\right)\right)\Delta\omega^i\left(s\right).\label{eq:second}
\end{eqnarray}
By Lemma \ref{int_by_parts_fin_var}
\begin{eqnarray}
\int_{\left(0,t\right]}\omega^i\left(s-\right)\dd{\rbr{\omega^c}^i\left(s\right)} & = & \rbr{\omega^{c}}^i \left(t\right)\omega^i\left(t\right)-\rbr{\omega^c}^i\left(0\right)\omega^i\left(0\right)-\int_{\left(0,t\right]}\rbr{S^c}^i_{s-}\rbr{\omega}\dd S_s^i\rbr{\omega}\nonumber \\
 &  & -\sum_{0<s\le t}\Delta\rbr{\omega^c}^i\left(s\right)\Delta\omega^i\left(s\right),\label{eq:change_of_var}
\end{eqnarray}
where $\int_{\left(0,t\right]}\rbr{S^c}^i_{s-}\rbr{\omega}\dd S_s^i\rbr{\omega}$ denotes the model-free It\^{o} integral.

By $\ra_{c\ra0+}$ we will denote the uniform convergence in $t\in [0, T]$ for \emph{typical}
$\omega\in\Omega$ as $c\ra0+.$

Using $\left|\Delta\rbr{\omega^c}^i\left(t\right)\right|\le\left|\Delta\omega^i\left(t\right)\right|$
and $\sum_{0<s\le t}\left(\Delta\omega^i\left(s\right)\right)^{2}\le\left[S^i, S^i\right]_{t}(\omega),$
by the dominated convergence we get
\[
\sum_{0<s\le t}\Delta\rbr{\omega^c}^i\left(s\right)\Delta\omega^i\left(s\right)\ra_{c\ra0+}\sum_{0<s\le t}\left(\Delta\omega^i\left(s\right)\right)^{2}.
\]
Now, by (\ref{eq:change_of_var}), the fact that $\Vert S^i-\rbr{S^c}^i\Vert_{\ns}\le c$
and Corollary \ref{cor:c\`{a}dl\`{a}g concentration estimate} we get the convergence
\begin{eqnarray*}
\int_{\left(0,t\right]}\omega^i\left(s-\right)\dd{\rbr{\omega^c}^i\left(s\right)} & \ra_{c\ra0+}^{\oP} & \left(\omega^i\left(t\right)\right)^{2}-\left(\omega^i\left(0\right)\right)^{2}-\int_{\left(0,t\right]}S^i_{s-}\rbr{\omega}\dd S^i_s\rbr{\omega} \\&  &-\sum_{0<s\le t}\left(\Delta\omega^i\left(s\right)\right)^{2}.
\end{eqnarray*}

Recall also that $\omega^{c}$ has finite total variation,
thus the Lebesgue-Stieltjes integral rules apply and we have
\begin{eqnarray*}
\int_{\left(0,t\right]}\rbr{\omega^c}^i\left(s-\right)\dd{\rbr{\omega^c}^i\left(s\right)} & = & \frac{1}{2}\left(\left(\rbr{\omega^c}^i\left(t\right)\right)^{2}-\left(\rbr{\omega^c}^i\left(0\right)\right)^{2}\right)-\frac{1}{2}\sum_{0<s\le t}\left(\Delta\rbr{\omega^c}^i\left(s\right)\right)^{2}\\
 &  & \ra_{c\ra0+}\frac{1}{2}\left(\left(\omega^i\left(t\right)\right)^{2}-\left(\omega^i\left(0\right)\right)^{2}\right)-\frac{1}{2}\sum_{0<s\le t}\left(\Delta\omega^i\left(s\right)\right)^{2}.
\end{eqnarray*}
We also have
\[
\sum_{0<s\le t}\Delta\left(\omega^i\left(t\right)-\rbr{\omega^c}^i\left(s\right)\right)\Delta\omega^i\left(s\right)\ra_{c\ra0+}0.
\]
Thus, from (\ref{eq:second}) and last three convergences we get
\begin{eqnarray}
\int_{\left(0,t\right]}\left(\omega^i-\rbr{\omega^c}^i\right)\dd{\rbr{\omega^c}^i} & \ra_{c\ra0+}^{\oP} & \left(\omega^i\left(t\right)\right)^{2}-\left(\omega^i\left(0\right)\right)^{2} \nonumber \\
& &-\int_{\left(0,t\right]}S^i_{s-}\rbr{\omega}\dd S_s^i\rbr{\omega}-\sum_{0<s\le t}\left(\Delta\omega^i\left(s\right)\right)^{2}\nonumber \\
 &  & -\frac{1}{2}\left(\left(\omega^i\left(t\right)\right)^{2}-\left(\omega^i\left(0\right)\right)^{2}\right)+\frac{1}{2}\sum_{0<s\le t}\left(\Delta\omega^i\left(s\right)\right)^{2}.\label{eq:uff}
\end{eqnarray}
By the integration by parts
\begin{equation}
\int_{\left(0,t\right]}S^i_{s-}\rbr{\omega}\dd S^i_s\rbr{\omega}=\frac{1}{2}\left(\left(\omega^i\left(t\right)\right)^{2}-\left(\omega^i\left(0\right)\right)^{2}\right)-\frac{1}{2}\left[S^i, S^i\right]_{t}.\label{eq:ufff}
\end{equation}
Finally, from (\ref{eq:uff}) and (\ref{eq:ufff}), recalling that
$\left[S^i, S^i \right]_{t}=\left\langle S^i, S^i \right\rangle _{t}+\sum_{0<s\le t}\left(\Delta\omega^i\left(s\right)\right)^{2}$
we have
\[
\int_{(0,t]}\left(\omega^i-\rbr{\omega^c}^i\right)\dd{\rbr{\omega^c}^i}\ra_{c\ra0+}^{\oP}\frac{1}{2}\left\langle S^i, S^i\right\rangle _{t}
\]
and from (\ref{eq:zero}) and (\ref{eq:first}) we get (\ref{pol1}):
\[
2c\cdot\TTV{S^i(\omega)}{\left[0,t\right]}{2c}\ra_{c\ra0+}^{\oP}\left\langle S^i, S^i \right\rangle _{t}(\omega).
\]

To obtain convergence (\ref{pol2}) we apply polarization.
\end{proof}

\begin{rem}
From Theorem \ref{q_var_TV_representation} an analogous result follows for any local
martingale, since, for any probability measure $\P$ such that the
coordinate process $S$ is a local martingale and any $B\in{\mathcal F}_{T}$
we have $\P\left(B\right)\le\oP\left(B\right),$ see \cite[Proposition 2.4]{LochPerkPro:2016}.
This, with the help of simple inequality
\[
c\left|\TTVemph{S^i+A}{\left[0,t\right]}c-\TTVemph {S^i}{\left[0,t\right]}c\right|\le c\cdot \TTVemph A{\left[0,t\right]}{},
\]
where $A$ is a real process with finite total variation (see \cite[Fact 17 and ineq. (2.14)]{LochowskiMilosSPA:2013})
proves similar result for any c\`adl\`ag semimartingale. However, for c\`adl\`ag  semimartingales
a stronger (almost sure) convergence may be obtained --- see \cite[Theorem 1]{LochQVTV:2017}.
\end{rem}

\subsection{Quadratic variation vs truncated variation of deterministic paths }

Let now $\omega:\left[0,T\right]\ra\R$ be a c\`{a}dl\`{a}g deterministic
path. Theorem \ref{q_var_TV_representation} raises the question what
is the relation between the existence of the limit $c\cdot\TTV{\omega}{\left[0,T\right]}c$
as $c\ra0+$ and the existence of the quadratic variation along some
sequence of partitions of $\omega.$ It is well known that there exist
continuous paths for which one may obtain \emph{arbitrary} non-decreasing
(and starting from $0$) quadratic variations, as limits of discrete
quadratic variations along appropriately chosen (refining) sequence
of partitions (cf. \cite[Theorem 7.1]{Obloj_local:2015}).

In fact, trajectories of a standard (one-dimensional) Brownian motion $B$ provide (with
probability $1$) examples of such paths.
This fact, together with the fact that for the standard Brownian motion
$B$ one has almost surely (see \cite[Theorem 1]{LochowskiMilosSPA:2013})
\[
c\cdot\TTV B{\left[0,T\right]}c\ra_{c\ra0+}T,
\]
proves that the existence of the limit $c\cdot\TTV{\omega}{\left[0,T\right]}c$
\emph{does not imply} the existence of the same quadratic variation
along a given refining sequence of partitions of $\omega.$

Also, for any continuous
function $\omega:\left[0,T\right]\ra\R,$ using the Darboux property
it is easy to construct a refining sequence of partitions such that
the sequence of corresponding discrete quadratic variations tends
to $0.$ Thus, the
existence of the quadratic variation along a given refining sequence of partitions
of $\omega$ \emph{does not imply} the existence of the finite limit
$c\cdot\TTV{\omega}{\left[0,T\right]}c$ as $c\ra0+$ (this may for
example tend to $+\ns$ for very irregular $\omega,$ for example
for $\omega$ which has $n^{2}$ oscillations of size $1/\sqrt{n}$
on the interval $\left[1/\left(n+1\right),1/n\right],$ $n\in\N$). However, it appears that it is possible to built up an It\^o calculus based on the truncated variation (see \cite[Sect. 4]{LochQVTV:2017}).

An open question remains if the existence of the limit $c\cdot\TTV{\omega}{\left[0,T\right]}c$
as $c\ra0+$ implies the existence of \emph{the same quadratic variation}
along \emph{some} sequence of partitions of $\omega.$ Below we give an example of such sequence for continuous
$\omega$ satisfying some additional condition. This sequence is a combination of both - the sequence of drawup and drawdown times and the sequence of the Lebesgue partitions.
The sequence of the drawup and drawdown times $\rho_k^n,$ $n \in \N,$ $k=0,1,\ldots$, is defined in the following way: let
\[
\rho_u^n = \inf \cbr{t \in (0, T]:  \omega(t) - \min_{0 < s \le t} \omega(s) = 2^{-n}},
\]
\[
\rho_d^n = \inf \cbr{t \in (0, T]: \max_{0 < s \le t} \omega(s) - \omega(t) = 2^{-n}}.
\]
If $\rho_u^n < \rho_d^n$ then we define $\rho_0^n = 0,$ $\rho_1^n = \rho_u^n,$ and for $k=1,2,\ldots$
\[
\rho_{2k}^n = \inf \cbr{t \in (\rho_{2k-1}^n, T]:  \max_{\rho_{2k-1}^n < s \le t} \omega(s) - \omega(t) = 2^{-n}},
\]
\[
\rho_{2k+1}^n = \inf \cbr{t \in (\rho_{2k}^n, T]:  \omega(t) - \min_{\rho_{2k}^n < s \le t} \omega(s) = 2^{-n}};
\]
otherwise we define $\rho_0^n = 0,$ $\rho_1^n = \rho_d^n,$ and for $k=1,2,\ldots$
\[
\rho_{2k}^n = \inf \cbr{t \in (\rho_{2k-1}^n, T]:   \omega(t) - \min_{\rho_{2k-1}^n < s \le t} \omega(s) = 2^{-n}},
\]
\[
\rho_{2k+1}^n = \inf \cbr{t \in (\rho_{2k}^n, T]:  \max_{\rho_{2k}^n < s \le t} \omega(s) - \omega(t) = 2^{-n}}.
\]
Now for $n \in \N$ and $k = 0, 1,2,\ldots,$ we define $\tau_{k,0}^n = \rho_k^n$ and for $i = 1,2, \ldots,$
\[
\tau_{k,i}^n = \inf\cbr{t \in \left(\tau_{k,i-1}^n, \rho_{k+1}^n \right]: \left| \omega\rbr{t} - \omega\rbr{\tau_{k,i-1}^n} \right| = 2^{-n}}.
\]
In all above definitions we assume that for $a \ge b,$ $(a,b] = \emptyset$ and  that $\inf \emptyset = +\ns.$

Finally, for  $n \in \N,$ we define $\tau_j^n,$ $j=0,1,\ldots,$ to be the ordered sequence of all $\tau_{k,i}^n,$ $k,i=0,1,\ldots.$

For $k \in \N$ by $i(k,n)$ let us define the greatest $i$ such that $\tau_{k,i}^n < \rho_{k+1}^n$ if $\rho_{k+1}^n <+ \ns$ and $i(k,n) = 0$ otherwise. Next, let $I_{\cbr{\rho_{k+1}^n \le T}} = 1$ if $\rho_{k+1}^n \le T$ and $I_{\cbr{\rho_{k+1}^n \le T}} = 0$ otherwise. We have the following
\begin{stw}
Let $\omega : \sbr{0,T} \ra \R$ be a continuous path. Assume that
\[
\lim_{n \ra + \ns}2^{-n} \TTVemph{\omega}{[0, T]}{2^{-n}}  = Q
\]
and
\begin{eqnarray}
\lim_{n\ra \ns} \sum_{k=1}^{\ns} \big( 2^{-n} \left| \omega\rbr{\rho_{k+1}^n \wedge T} + (-1)^{k+1}2^{-n}I_{\cbr{\rho_{k+1}^n \le T}} -  \omega\rbr{\tau_{k,i(k,n)}^n\wedge T} \right| \nonumber \\
- \rbr{ \omega\rbr{\rho_{k+1}^n \wedge T}  -  \omega\rbr{\tau_{k,i(k,n)}^n\wedge T} }^2 \big) = 0 \label{cond}
\end{eqnarray}
if $\rho_u^n < \rho_d^n$ or
\begin{eqnarray}
\lim_{n\ra \ns} \sum_{k=1}^{\ns} \big( 2^{-n} \left| \omega\rbr{\rho_{k+1}^n \wedge T} + (-1)^{k}2^{-n}I_{\cbr{\rho_{k+1}^n \le T}} -  \omega\rbr{\tau_{k,i(k,n)}^n\wedge T} \right| \nonumber \\
- \rbr{ \omega\rbr{\rho_{k+1}^n \wedge T}  -  \omega\rbr{\tau_{k,i(k,n)}^n\wedge T} }^2 \big) = 0 \label{cond1}
\end{eqnarray}
if $\rho_u^n \ge \rho_d^n.$
Then also
\[
\lim_{n\ra \ns} \sum_{k=1}^{\ns} \rbr{ \omega\rbr{\tau_{k+1}^n \wedge T}  -  \omega\rbr{\tau_{k}^n\wedge T} }^2 = Q.
\]
\end{stw}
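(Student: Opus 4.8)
The plan is to fix $n$, abbreviate $c=2^{-n}$, and decompose $[0,T]$ into the excursion segments $(\rho_k^n,\rho_{k+1}^n]$ determined by the drawup/drawdown times, on each of which the oscillation times $\tau_{k,i}^n$ give the refinement. The first thing I would record is that the shifted endpoint value appearing in~\eqref{cond}, namely $\omega(\rho_{k+1}^n\wedge T)+(-1)^{k+1}2^{-n}\mathbf{1}_{\cbr{\rho_{k+1}^n\le T}}$, is exactly the running extremum $E_k$ of $\omega$ over the segment --- the minimum (resp.\ maximum) from which the concluding $c$-drawup (resp.\ $c$-drawdown) ending the segment at $\rho_{k+1}^n$ is measured; the alternating sign $(-1)^{k+1}$ (or $(-1)^k$ in the case $\rho_u^n\ge\rho_d^n$) merely records whether the first reversal was a drawup or a drawdown. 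With this identification, condition~\eqref{cond} reads $\sum_k\rbr{c\,|E_k-\omega(\tau_{k,i(k,n)}^n\wedge T)|-\rbr{\omega(\rho_{k+1}^n\wedge T)-\omega(\tau_{k,i(k,n)}^n\wedge T)}^2}\ra0$.

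The heart of the argument is a segment-by-segment geometric analysis. Consider a complete segment, say one ending in a drawup, so that $E_k$ is the minimum of $\omega$ on $(\rho_k^n,\rho_{k+1}^n]$ and $\omega(\rho_{k+1}^n)=E_k+c$. I would show that for $i=1,\dots,i(k,n)$ every increment $\omega(\tau_{k,i}^n)-\omega(\tau_{k,i-1}^n)$ has modulus exactly $c$: the point is that inside the segment the path cannot record a $c$-oscillation \emph{against} the prevailing direction, since an upward move of size $c$ from the running minimum would by definition already trigger $\rho_{k+1}^n$. Consequently all interior oscillation nodes are clean $\pm c$ steps, their number is $i(k,n)=\lfloor S_k/c\rfloor$ with $S_k:=|E_k-\omega(\rho_k^n)|$, the last node $\omega(\tau_{k,i(k,n)}^n)$ sits at distance $r_k:=S_k-\lfloor S_k/c\rfloor c\in[0,c)$ from $E_k$, and the final (boundary) increment $\omega(\rho_{k+1}^n)-\omega(\tau_{k,i(k,n)}^n)$ has modulus $c-r_k$. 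Using $\omega(\tau_{k,i(k,n)}^n)=E_k\pm r_k$ and $\omega(\rho_{k+1}^n)=E_k\pm c$, the contribution of segment $k$ to the quadratic sum is the exact per-segment identity
\[
i(k,n)c^2+(c-r_k)^2 = c\,S_k-\rbr{c\,r_k-(c-r_k)^2},
\]
whose bracketed term is precisely the $k$-th summand of~\eqref{cond}.

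Next I would invoke the truncated-variation representation of a continuous path in terms of its drawup/drawdown decomposition (as developed in~\cite{Lochowski_stoch_integral:2014,LochowskiMilosSPA:2013}): each significant excursion between consecutive extrema contributes its size diminished by $c$, and since $\omega(\rho_k^n)$ lies at distance $c$ from the preceding extremum $E_{k-1}$, that contribution equals exactly $S_k$. Hence $\TTV{\omega}{[0,T]}{c}=\sum_{k\ge1}S_k$, so $c\,\TTV{\omega}{[0,T]}{c}=\sum_{k\ge1}c\,S_k$. Summing the per-segment identity over $k$ then gives
\[
\sum_{k}\rbr{\omega(\tau_{k+1}^n\wedge T)-\omega(\tau_k^n\wedge T)}^2 = c\,\TTV{\omega}{[0,T]}{c}-\Xi_n+B_n,
\]
where $\Xi_n$ is the sum in~\eqref{cond} and $B_n$ collects the two boundary segments not covered by the bulk analysis: the initial segment $[0,\rho_1^n]$, which carries no interior nodes and on which $\omega$ varies by less than $2c$, and the final incomplete segment (where $\rho_{k+1}^n>T$), on which the true extremum may differ from $\omega(T)$ by some $\delta_n<c$. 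Both contribute $O(c^2)=O(4^{-n})$, so $B_n\ra0$.

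Letting $n\ra+\ns$ and using the two hypotheses --- $c\,\TTV{\omega}{[0,T]}{c}=2^{-n}\TTV{\omega}{[0,T]}{2^{-n}}\ra Q$ and $\Xi_n\ra0$ from~\eqref{cond} (respectively~\eqref{cond1}, with $(-1)^k$, in the case $\rho_u^n\ge\rho_d^n$) --- together with $B_n\ra0$ yields $\sum_k\rbr{\omega(\tau_{k+1}^n\wedge T)-\omega(\tau_k^n\wedge T)}^2\ra Q$, which is the assertion. The main obstacle is the rigorous bookkeeping behind the second and third steps: proving that the interior oscillation steps are genuinely clean $\pm c$ increments counted by $\lfloor S_k/c\rfloor$, matching the running extremum $E_k$ with the shifted endpoint $\omega(\rho_{k+1}^n)\mp c$, and pinning down the truncated-variation representation together with the $O(4^{-n})$ control of the two boundary segments. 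The delicate point throughout is the wiggle-absorption phenomenon --- sub-$c$ fluctuations create no partition nodes, yet one must verify they do not disturb either the count $i(k,n)$ or the excursion decomposition underlying $\TTV{\omega}{[0,T]}{c}$.
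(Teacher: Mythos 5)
Your argument is correct and follows exactly the route the paper itself indicates but does not carry out: the paper omits the proof as ``rather technical,'' saying only that the result follows from the decomposition of the truncated variation between consecutive drawup and drawdown times, and that decomposition is precisely your combination of the per-segment identity $i(k,n)c^2+(c-r_k)^2=c\,S_k-\rbr{c\,r_k-(c-r_k)^2}$ with $\TTV{\omega}{[0,T]}{c}=\sum_{k\ge 1}S_k$ and the $O(4^{-n})$ control of the first and last segments. The bookkeeping you flag all checks out --- the interior steps are clean $\pm c$ increments because a $c$-move against the prevailing direction from a node would force the running drawdown (resp.\ drawup) to reach $c$ and hence trigger $\rho_{k+1}^n$, the shifted endpoint is indeed the running extremum $E_k$, the bracketed term matches the $k$-th summand of the hypothesis, and finiteness of the number of segments follows from uniform continuity --- so your sketch is a faithful reconstruction of the intended proof.
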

We do not give the proof of this Proposition since it is rather technical. The result follows form the decomposition of the truncated variation between the consecutive drawup and drawdown times, see \cite{LochowskiMilosSPA:2013}.  From results of \cite{LochowskiMilosSPA:2013} it follows that paths of continuous real semimartingales satisfy (\ref{cond}) and (\ref{cond1}) with probability $1$. For a standard Brownian motion it is also possible to prove (\ref{cond}) and (\ref{cond1}) directly, using formulas proven by Taylor in \cite{Taylor:1975kx}.

\subsection*{Acknowledgements}
The work of L. Ch. G. and F. J. M. was supported in part by the National Research Foundation of South Africa (Grant Number: 105924). The work of R. M. {\L}.  was partially funded by the National Science Centre, Poland, under Grant No.~$2016/21/$B/ST$1/01489$. R. M. \L. is very grateful to Prof. Volodya Vovk for very insightful remarks and encouragement. The authors are also grateful to anonymous referee, whose remarks helped to state the obtained results in much more precise way.

\end{document}